\documentclass[acmsmall,screen,nonacm]{acmart}

\usepackage{amsmath}
\usepackage{mathtools}
\usepackage{stmaryrd}
\usepackage{booktabs}
\usepackage{microtype}
\usepackage{xspace}
\usepackage{enumitem}
\usepackage{tikz}
\usetikzlibrary{decorations.pathreplacing}
\usepackage{listings}
\usepackage{xcolor}
\usepackage[most]{tcolorbox}
\usepackage[nameinlink,noabbrev]{cleveref}

\usepackage{bibunits} \defaultbibliographystyle{ACM-Reference-Format} \defaultbibliography{references}  

\lstdefinelanguage{Lean}{
  morekeywords={def,theorem,lemma,inductive,structure,class,where,namespace,end,by,forall,Prop,Type,match,with,if,then,else,let,in,do,axiom,constant,instance,import,abbrev},
  sensitive=true,
  morecomment=[l]{--},
  morecomment=[s]{/-}{-/},
  morestring=[b]",
  backgroundcolor=\color{gray!06}
}

 \definecolor{TheoremBackground}{gray}{0.95} 

\tcolorboxenvironment{theorem}{ enhanced,
  breakable,
  colback=TheoremBackground,
  colframe=TheoremBackground,
  boxrule=0pt,
  arc=1mm,
  left=0.9em,
  right=0.9em,
  top=0.65em,
  bottom=0.65em,
  before skip=0.9\baselineskip,
  after skip=0.9\baselineskip
}
\theoremstyle{theorem}

\newcommand{\NI}{\noindent}
\providecommand{\GIRY}{\ensuremath{\mathcal{G}}}

\newcommand{\SEMB}[1]{\lbrack\!\lbrack #1 \rbrack\!\rbrack}

\newenvironment{GRAMMAR}{\[\begin{array}{lcl}}{\end{array}\]}
\newenvironment{RULES}{\[\begin{array}{c}}{\end{array}\]}

\newcommand{\ACCESS}{\mathsf{Acc}}
\newcommand{\POWERSET}{\mathcal{P}}
\newcommand{\POWERSETFIN}{\POWERSET_{\text{\emph{fin}}}}

\newcommand{\EMPH}[1]{\emph{#1}}
\newcommand{\infer}[2]{\frac{\displaystyle{ #1 }}{\displaystyle{ #2 }}}
\newcommand{\ZEROPREMISERULE}[1]{\infer{-}{#1}}
\newcommand{\ONEPREMISERULE}[2]{\infer{#1}{#2}}

\newcommand{\RQONE}{{RQ1}}
\newcommand{\RQTWO}{{RQ2}}
\newcommand{\RQTHREE}{{RQ3}}
\newcommand{\RQFOUR}{{RQ4}}
\newcommand{\RQFIVE}{{RQ5}}

\newenvironment{FIGURE}{\begin{figure}\rule{\linewidth}{.5pt}}{\rule{\linewidth}{.5pt}\end{figure}}
\newcommand{\IMAGESIZE}[2]{\begin{center}\includegraphics[width=#1\textwidth]{images/#2}\end{center}}
\newcommand{\IMAGE}[1]{\IMAGESIZE{0.8}{#1}}

    \newcommand{\IE}{\textit{i.e.},\ } \newcommand{\EG}{e.g.,\ }    \newcommand{\STARTCF}{Cf.\ }

\newcommand{\lread}{\mathsf{read}}
\newcommand{\lwrite}{\mathsf{write}}

\providecommand{\AStepL}[1]{\mathrel{\xrightarrow{\,#1\,}_{\!a}}}
\providecommand{\BStepL}[1]{\mathrel{\xrightarrow{\,#1\,}_{\!b}}}
\providecommand{\PStepL}[1]{\mathrel{\xrightarrow{\,#1\,}_{\!p}}}
\providecommand{\AStepT}[1]{\mathrel{\xRightarrow{\,#1\,}_{\!a}}}
\providecommand{\BStepT}[1]{\mathrel{\xRightarrow{\,#1\,}_{\!b}}}
\providecommand{\PStepT}[1]{\mathrel{\xRightarrow{#1}_{\!p}}}
\providecommand{\lread}{\mathsf{read}}
\providecommand{\lwrite}{\mathsf{write}}

\providecommand{\bind}{\mathbin{\gg\!\!=}}

\newcommand{\RHAStep}{\mathrel{\rightsquigarrow_{\!a}^{\mathsf{RH}}}}
\newcommand{\RHBStep}{\mathrel{\rightsquigarrow_{\!b}^{\mathsf{RH}}}}
\newcommand{\RHPStep}{\mathrel{\rightsquigarrow_{\!p}^{\mathsf{RH}}}}
\newcommand{\RHStep}{\ensuremath{\mathsf{step}_{br,locs}^{\mathsf{RH}}}}
\newcommand{\RHStepHat}{\ensuremath{\widehat{\mathsf{step}}_{br,locs}^{\mathsf{RH}}}}
\newcommand{\RunRH}[1]{\ensuremath{\mathsf{run}^{#1}_{br,locs}}}
\newcommand{\RunRHT}[1]{\ensuremath{\mathsf{run}^{#1}_{br,locs,\mathsf{tr}}}}
\newcommand{\RunDT}[1]{\ensuremath{\mathsf{run}^{#1}_{\mathsf{det},\mathsf{tr}}}}
\newcommand{\protV}{\ensuremath{\pi_{locs}}}

\newcommand{\dirac}{\ensuremath{\delta}}
\newcommand{\FDist}{\ensuremath{\mathsf{Dist}}}
\newcommand{\PMF}{\ensuremath{\mathsf{PMF}}}
\newcommand{\Mem}{\ensuremath{\mathsf{Memory}}}
\newcommand{\Loc}{\ensuremath{\mathsf{Loc}}}
\newcommand{\INT}{\mathbb{Z}}
\newcommand{\Val}{\ensuremath{\mathsf{Val}}}
\newcommand{\adj}{\ensuremath{\mathsf{adj}}}
\newcommand{\safe}{\ensuremath{\mathsf{safe}}}
\newcommand{\wf}{\ensuremath{\mathsf{wf}}}
\newcommand{\pflip}{\ensuremath{\mathsf{pflip}}}
\newcommand{\EvalOrd}[1]{\ensuremath{\llbracket #1 \rrbracket^{\mathsf{ord}}}}
\newcommand{\Nat}{\ensuremath{\mathbb{N}}}

\newcommand{\keep}{\ensuremath{\mathsf{keep}}}

\newcommand{\Low}{\ensuremath{\mathsf{L}}}
\newcommand{\High}{\ensuremath{\mathsf{H}}}

\newcommand{\SecLevel}{\ensuremath{\mathsf{Sec}}}
\newcommand{\LowView}{\ensuremath{\mathsf{lowView}}}
\newcommand{\HighView}{\ensuremath{\mathsf{highView}}}
\newcommand{\PView}{\ensuremath{\mathsf{PView}}}
\newcommand{\ProtView}{\ensuremath{\mathsf{protView}}}
\newcommand{\AccessTrace}{\ensuremath{\mathsf{accessTrace}}}
\newcommand{\RunDTrace}[1]{\ensuremath{\mathsf{runDTrace}^{#1}}}
\newcommand{\RunRHTrace}[1]{\ensuremath{\mathsf{runRHTrace}^{#1}}}
\newcommand{\ObsOrd}[1]{\ensuremath{\mathsf{ObsOrd}^{#1}}}
\newcommand{\ObsRH}[1]{\ensuremath{\mathsf{ObsRH}^{#1}}}

\newcommand{\Admissible}{\ensuremath{\mathsf{Admissible}}}
\newcommand{\LowAccessTrace}{\ensuremath{\mathsf{lowAccessTrace}}}
\newcommand{\Terminal}{\ensuremath{\mathsf{terminal}}}

\newcommand{\Agree}{\ensuremath{\mathsf{Agree}}}
\newcommand{\DProtectedView}[1]{\ensuremath{\mathsf{DView}^{#1}}}
\newcommand{\LowEqRel}[1]{\mathrel{\approx_{#1}}}
\newcommand{\NoFlip}{\ensuremath{\mathsf{noFlip}}}
\newcommand{\OrdNI}{\ensuremath{\mathsf{OrdinaryNI}}}
\newcommand{\RHNI}{\ensuremath{\mathsf{RowhammerNI}}}
\newcommand{\RHProtectedView}[1]{\ensuremath{\mathsf{RHView}^{#1}}}
\newcommand{\RobustRHNI}{\ensuremath{\mathsf{RobustRowhammerNI}}}
 
\begin{document}

\title[Mechanised operational semantics of Rowhammer]{Mechanised operational semantics of Rowhammer}

\author{Martin Berger}
\email{contact@martinfriedrichberger.net}
\orcid{https://orcid.org/0000-0003-3239-5812}
\affiliation{
  \institution{University of Sussex}
  \city{Brighton}
  \country{UK}
}
\affiliation{
  \institution{Montanarius Ltd}
  \city{London}
  \country{UK}
}

\author{Amir Naseredini}
\email{sahnaseredini@gmail.com}
\affiliation{
  \institution{Huawei R\&D UK Ltd}
  \city{Cambridge}
  \country{UK}
}

\renewcommand{\shortauthors}{A.~Naseredini \& M.~Berger}  \begin{abstract}

Rowhammer is a hardware vulnerability in dynamic random-access memory
(DRAM) in which repeated accesses to one or more aggressor rows can
induce bit-flips in nearby victim rows. This phenomenon violates a
core assumption of conventional programming language semantics: that
reading from or writing to one memory location does not modify others.
Despite the security importance of this phenomenon, there
is no widely established formal framework connecting Rowhammer faults
with program behaviour.  This makes it difficult to reason rigorously
about the efficacy of proposed Rowhammer defences and the
program-level guarantees they provide. To address this gap, we present a probabilistic
small-step operational semantics for an idealised imperative language
subject to Rowhammer-style faults.  The semantics is
high-level in that it abstracts from DRAM internals and semiconductor
physics. A general probabilistic fault model parameterises the
semantics, representing Rowhammer-style faults by assigning probabilities to
bit-flips during read or write operations within a specified victim region.
The resulting distributions are propagated through programs using the
standard monadic structure of probabilistic computation. As a case study,
we formalise physical separation, a well-known defence that
places program variables sufficiently far apart in physical memory
that an access to one variable cannot disturb another.  We prove a
distribution-independent semantic collapse theorem: for every finite
execution, including prefixes of terminating and non-terminating
executions, the protected projection of the probabilistic Rowhammer
semantics is the Dirac distribution of the corresponding Rowhammer-free
execution.  Furthermore, we develop an observation-parametric
account of secure information flow.  Non-interference is expressed as
a hyperproperty comparing the distributions of low observations from
low-equivalent initial memories.  Under physical separation, ordinary
non-interference and probability-sensitive Rowhammer non-interference
coincide for every observer of protected behaviour.  Consequently,
physical separation preserves non-interference for every admissible
fault model, while every Rowhammer non-interference violation
reflects a violation already present in the  Rowhammer-free semantics. The
development  is fully mechanised in Lean using mathlib, relying on
no unfinished proofs or problem-specific axioms.

\end{abstract}

\begin{CCSXML}
<ccs2012>
   <concept>
       <concept_id>10003752.10010124.10010131</concept_id>
       <concept_desc>Theory of computation~Program semantics</concept_desc>
       <concept_significance>500</concept_significance>
       </concept>
   <concept>
       <concept_id>10011007.10010940.10010992.10010998</concept_id>
       <concept_desc>Software and its engineering~Formal methods</concept_desc>
       <concept_significance>500</concept_significance>
       </concept>
   <concept>
       <concept_id>10002978.10003006.10011608</concept_id>
       <concept_desc>Security and privacy~Information flow control</concept_desc>
       <concept_significance>300</concept_significance>
       </concept>
   <concept>
       <concept_id>10002978.10003006</concept_id>
       <concept_desc>Security and privacy~Systems security</concept_desc>
       <concept_significance>300</concept_significance>
       </concept>
 </ccs2012>
\end{CCSXML}

\ccsdesc[500]{Theory of computation~Program semantics}
\ccsdesc[500]{Software and its engineering~Formal methods}
\ccsdesc[300]{Security and privacy~Information flow control}
\ccsdesc[300]{Security and privacy~Systems security}
\ccsdesc[500]{Security and privacy~Logic and verification}

\keywords{Probability theory, Probabilistic fault models, Rowhammer,
  Programming language semantics, Operational semantics, Relational
  safety, Mechanised verification, Lean 4, Secure information flow,
  Non-interference.}
 
\maketitle

\section{Introduction}
\label{section_introduction}

Most reasoning about software relies on a locality assumption about
memory access effects:
\begin{quote}
  \EMPH{Reading or writing one memory location does not modify other
    locations.}
\end{quote}
Rowhammer violates this assumption.  Repeated activation of
one or more aggressor rows in dynamic random-access memory (DRAM) can
induce bit-flips in physically nearby victim rows
\cite{kim2014rowhammer,seaborn2015exploiting}.  The resulting mismatch
between the memory model assumed by software semantics and the
behaviour of its semiconductor implementation creates a semantic gap:
hardware research describes disturbance mechanisms, address mappings,
and mitigations, whereas software verification reasons about commands,
stores, traces, observations, and security properties.
This gap is particularly important for security.  Rowhammer is usually
presented as an integrity failure, but corruption of a high-security
value can alter control flow or a later low-security output, while
corruption of a low-security location can directly change an
attacker-visible result.  Existing theories of secure information flow
and non-interference
\cite{goguen1982security,sabelfeld2003language} therefore cannot simply
be applied unchanged: they require a semantics that makes the
probabilistic, non-local effect of memory accesses explicit.
This paper asks closely connected research questions.
\begin{itemize}

\item \textbf{\RQONE.} Can Rowhammer be modelled compositionally in
  the familiar language of operational semantics, without modelling
  device physics?

\item \textbf{\RQTWO.} Can the Rowhammer fault mechanism be captured
  by a small, conceptually non-intrusive semantic interface?

\item \textbf{\RQTHREE.} Can such an interface be made substantially
  independent of the target programming language?

\item \textbf{\RQFOUR.} Can such a model express and justify a concrete
  Rowhammer defence?

\item \textbf{\RQFIVE.} Can the abstraction make Rowhammer amenable
  to existing semantic security frameworks, such as information flow
  and non-interference reasoning?

\end{itemize}
We answer all questions affirmatively.

\subsection{The semantic view: abstracting memory access}
A standard way to give semantics to an imperative language is to treat
programs as state transformers.  In its most concrete form, assignment
and variable lookup are explained by clauses such as
\[
  \SEMB{x := e}(m)
  =
  m[x\mapsto v]
  \qquad\text{where}\qquad
  \SEMB{e}(m)=v,
\]
and
\[
  \SEMB{x}(m)=m(x).
  \]
where $m$ ranges over memories and $e$ over expressions.  These equations present memory as a
concrete map and use two primitive operations on that map: read the
value stored at a location, and update the value stored at a location.
A recurring lesson of programming language semantics is that these two
operations should be made explicit.  Once lookup and update are
abstracted from the concrete function space $\Mem = \Loc\to\Val$,
where $\Loc$ are locations in memory, and $\Val$ the
values being stored, the syntax of the language no longer has to know
how memory is represented.  The same assignment rule can then be
interpreted over different models of store, aliasing, heaps,
concurrency, failure, or other effects.

This idea has appeared
repeatedly, under different names.  McCarthy's select/store axioms for
arrays isolate reading and updating as primitive operations on
memories~\cite{mccarthy1962mathematical}.  Monadic and
algebraic-effect semantics make the separation systematic:
computations with state are interpreted as effectful maps, and state
itself may be presented by operations such as lookup and update
together with their laws~\cite{moggi1991notions,plotkin2002notions}.
Modular operational semantics pursues a related goal from the
small-step side, factoring stores and other auxiliary components out
of individual transition rules so that language constructs can be
specified more independently~\cite{mosses2004modular}.  The common
pattern is that memory access becomes an interface rather than an
implementation detail.

Our Rowhammer semantics uses this idea.  Ordinary memory operations
are deterministic:
\[
   \lread : (\Loc \times \Mem) \to \Val
      \qquad
   \lwrite : (\Loc \times \Val \times \Mem) \to \Mem
\]
Rowhammer-affected read and write are instead probabilistic
operations\footnote{We discuss in
\Cref{section_probabilistic_model_of_rowhammer} why the codomain of
$\lread$ needs to be $\FDist(\Val\times\Mem)$ and not
$\Val\times\FDist(\Mem)$.}:
\[
   \lread : (\Loc \times \Mem)\to \FDist(\Val\times\Mem)
      \quad
   \lwrite : (\Loc \times \Val \times \Mem) \to \FDist(\Mem)
\]
The operations still have the shape of a local memory access, but
their results are distributions that include the post-access memory.
The distributions model Rowhammer faults.

The rest of the semantics is then obtained monadically:
deterministic steps are embedded as Dirac distributions, and compound
computations are sequenced by monadic bind for the probability monad.
Thus the Rowhammer model is not a separate hardware semantics bolted
onto the language.  It is a small probabilistic reinterpretation of the
read/write interface already present in ordinary state-transformer
semantics.  This is the key abstraction boundary used throughout the
paper: Rowhammer enters only at memory accesses, the surrounding
operational semantics lifts compositionally around that local
probabilistic effect.

This move from deterministic to probabilistic state transformers is not unique to
Rowhammer: deterministic Turing machines and programming languages naturally become
probabilistic by allowing programs to branch according to a probability distribution.
In those settings, probabilistic choice is an intentional feature of the
syntax or model. Rowhammer is different. Its probabilistic
behaviour is an unwanted, observable consequence of the physical substrate, making it
conceptually closer to Shannon's noisy-channel model from information theory: an
intended memory operation is transmitted through an unreliable medium and may be
received as corrupted state. In contrast, this noise is not
merely ambient: it is actively shaped by an attacker's chosen access patterns and
strictly constrained by physical locality in DRAM. Our semantic task is therefore to
expose this hardware-level unreliability without modifying the language syntax,
placing a probabilistic fault model directly behind the ordinary read/write interface.
Ultimately, the novelty lies not in using probabilistic state transformers per se, but
in the observation that Rowhammer can be elegantly isolated as a probabilistic
reinterpretation of memory access.

\subsection{Contributions}
To address the research questions, we make the following contributions:
\begin{itemize}

\item A fine-grained labelled operational semantics in which Rowhammer
  has a small, explicit interface at reads and writes.

\item A general probabilistic fault model separating spatial
  confinement from numerical assumptions about bit-flips.

\item A distribution-independent soundness theorem for physical
  separation, a well-known Rowhammer defence, covering terminating and
  non-terminating executions.

\item An observation-parametric theorem that physical separation
  preserves and reflects probability-sensitive non-interference.

\end{itemize}
The complete development is mechanised in Lean \cite{moura2021lean4}
using mathlib \cite{mathlib2020}. The development contains no
unfinished proofs or problem-specific axioms.  An axiom audit
(\texttt{scripts/axiomAudit.sh} in the repo) confirms that all
results rest only on Lean's three foundational axioms
(\texttt{propext}, \texttt{Classical.choice}, \texttt{Quot.sound}): no
probability axioms are added.  The Lean excerpts shown are
automatically extracted verbatim from the sources so they cannot drift
from the development.  All Lean code is open-sourced and anonymised at
\cite{NaserediniA:leanProofsSeparationANONYMISED}.  Full operational
rules, auxiliary metatheory, and extended mechanisation details are
provided in the supplementary material. The appendices cited in the
text are provided as anonymised supplementary material.
 \section{Introduction to Rowhammer}\label{section_rowhammer}

To appreciate what the present work does and does not promise, it
helps to understand the physical mechanism it abstracts.  The rest of
the paper relies on only a few concepts from DRAM architecture, rows,
activation, refresh, and the locality of disturbance, and we introduce
them here from first principles.  Readers already familiar with
Rowhammer may skip to the examples at the end of this section.  For
the underlying semiconductor physics we refer to
\cite{WalkerA:ondramratpoi}. For a broad survey of the attack
landscape see \cite{mutlu2019retrospective}.

\paragraph{The memory hierarchy.}
DRAM occupies an important point in a trade-off
(\Cref{fig:memory-hierarchy}).  Above it sit registers and caches:
fast, but expensive.  Below sits persistent storage: vast, but orders
of magnitude slower.  DRAM's role is to be as large and as cheap as
possible while remaining just fast enough. It achieves this with
only one transistor and one capacitor per bit, against the six
transistors of a cache cell.  This economic pressure towards density
is Rowhammer's root cause.  The more tightly cells are packed, the
stronger the electrical coupling between them, which is why the
problem has worsened with every DRAM generation
\cite{kim2014rowhammer}.

\begin{figure}
  \centering
  \begin{tikzpicture}[
      lvl/.style={draw=black!60, minimum height=0.62cm, font=\small,
                  inner sep=2pt, align=center},
      side/.style={font=\small\itshape, align=center}]
    \node[lvl, minimum width=2.2cm, fill=black!8]
      (l1) at (0, 2.1) {registers};
    \node[lvl, minimum width=4.0cm, fill=black!8]
      (l2) at (0, 1.4) {caches (L1--L3)};
    \node[lvl, minimum width=5.8cm, fill=blue!20, thick]
      (l3) at (0, 0.7) {DRAM main memory};
    \node[lvl, minimum width=7.6cm, fill=black!8]
      (l4) at (0, 0)   {persistent storage (SSD, disk)};
    \draw[->, black!70] (-4.4, -0.2) -- (-4.4, 2.3)
      node[side, midway, left, xshift=-2pt] {faster,\\costlier\\per bit};
    \draw[->, black!70] (4.4, 2.3) -- (4.4, -0.2)
      node[side, midway, right, xshift=2pt] {larger,\\cheaper\\per bit};
  \end{tikzpicture}
  \caption{The memory hierarchy.  Each level trades speed for
    capacity.  DRAM (highlighted) wins its place through cell
    minimalism, one transistor and one capacitor per bit, and the
    resulting drive towards ever-denser packing is the root cause of
    the disturbance effects this paper models.}
  \label{fig:memory-hierarchy}
\end{figure}

\paragraph{How DRAM stores data.}
At the hardware level, each DRAM bit is stored as electrical charge in
a capacitor, guarded by an access transistor.  Cells are organised
into large two-dimensional grids called \emph{banks}, which operate
independently of one another.  Within a bank, a horizontal line of
cells, a \emph{row}, typically a few kilobytes wide, shares a common
control wire known as the \emph{wordline}, while cells in the same
column share a \emph{bitline}.  Individual cells cannot be addressed
directly.  To access any cell, the memory controller must
\emph{activate} the row containing it: raising the row's wordline
voltage connects every cell of the row to its bitline and copies the
row's contents into the bank's \emph{row buffer}, where reads and
writes then take place.  Accessing a different row of the same bank
requires closing the currently open row, meaning lowering its wordline
voltage before activating the new one.
Capacitors leak charge.  Left alone, a DRAM cell loses its stored
charge within a fraction of a second, so the memory controller
periodically rewrites every row---a \emph{refresh}.  In commodity
DRAM, every row is typically guaranteed a refresh once every 64\,ms
\cite{kim2014rowhammer}.  The reliability of DRAM therefore rests on a
quantitative margin: between two refreshes, a cell's charge must stay
on the correct side of a threshold, despite whatever electrical
disturbance its neighbourhood produces.

\paragraph{The Rowhammer effect.}
Kim et al.~\cite{kim2014rowhammer} showed in 2014 that this margin can
be exhausted \emph{deliberately}.  Every activation of a row causes
slight voltage fluctuations on its wordline, and these fluctuations
accelerate charge leakage in the cells of \emph{physically adjacent}
rows, an electromagnetic coupling effect that grows stronger as cells
are packed more densely.  A single activation is harmless.  But a
program that activates the same row hundreds of thousands of times
within one refresh interval, \emph{hammering} it, can drain a
neighbouring cell past its threshold before the next refresh arrives.
The value read thereafter is not the value last written: a bit has
flipped in a row that was never written to, or even addressed, by the
program.  The repeatedly activated rows are called \emph{aggressor}
rows and the corrupted ones \emph{victim} rows
(\Cref{fig:blast-radius}).  The minimum number of activations needed
is called the \emph{Rowhammer threshold}: it has dropped sharply as
DRAM density has increased, from over a hundred thousand activations
on DDR3 to tens of thousands on modern chips
\cite{kim2014rowhammer,frigo2020trrespass}, which is why the problem
is getting worse, not better, with each DRAM generation.

\begin{figure}
  \centering
  \begin{tikzpicture}[
      row/.style={draw=black!60, minimum width=3.2cm, minimum height=0.42cm,
                  font=\small\ttfamily, inner sep=1pt},
      note/.style={font=\small, anchor=west},
      yscale=0.52]
\node[row, fill=white]    (r0) at (0, 4) {$\cdots$};
    \node[row, fill=red!15]   (r1) at (0, 3) {row $r{-}2$};
    \node[row, fill=red!45]   (r2) at (0, 2) {row $r{-}1$};
    \node[row, fill=blue!25]  (r3) at (0, 1) {row $r$};
    \node[row, fill=red!45]   (r4) at (0, 0) {row $r{+}1$};
    \node[row, fill=red!15]   (r5) at (0,-1) {row $r{+}2$};
    \node[row, fill=white]    (r6) at (0,-2) {$\cdots$};
\node[note] at (2.0, 1)  {aggressor row (hammered)};
    \node[note] at (2.0, 2)  {victim rows: highest flip probability};
    \node[note] at (2.0, 0)  {victim rows: highest flip probability};
    \node[note] at (2.0, 3)  {weaker disturbance};
    \node[note] at (2.0,-1)  {weaker disturbance};
\draw[decorate, decoration={brace,  mirror, amplitude=5pt}]
      (-1.85, 3.5) -- (-1.85, -1.5)
      node[midway, xshift=-12pt, rotate=90, font=\small]
      {blast radius ($br = 3$)};
  \end{tikzpicture}
  \caption{Disturbance is local.  Repeatedly activating
    (\emph{hammering}) row $r$ accelerates charge leakage in nearby
    rows. The flip probability is highest for the immediate
    neighbours $r\pm1$ and decays with distance.  Beyond the blast
    radius $br$ the effect is negligible.  The victim rows are
    corrupted although the program never writes to or even
    addresses them.}
  \label{fig:blast-radius}
\end{figure}

Rowhammer is not an exotic laboratory effect.  It is exploitable from
ordinary unprivileged code, including JavaScript in a web browser
\cite{gruss2016rowhammerjs}, and flipping a single page-table or
capability bit can escalate one hardware fault into arbitrary memory
access \cite{seaborn2015exploiting}.  The effect has also grown richer
since its discovery.  Half-Double \cite{kogler2022halfdouble} induces
flips beyond immediate neighbours by combining accesses at different
distances, TRRespass \cite{frigo2020trrespass} and Blacksmith
\cite{jattke2022blacksmith} bypass in-DRAM mitigations with many-sided
and non-uniform hammering patterns, and RAMBleed
\cite{kwong2020rambleed} turns the data-dependence of flip
probabilities into a \emph{read} primitive, making Rowhammer a
confidentiality problem as well as an integrity problem.

\paragraph{Locality and the blast radius.}
The one saving grace of DRAM physics is its strong spatial locality:
the disturbance an activation causes decays rapidly with physical
distance, and beyond a few rows it is negligible.  In the rest of the
text we call the maximum distance at which an access can disturb a
cell the \emph{blast radius}, abbreviated $br$.  The blast radius of a
given DRAM device can be approximated empirically, and it is the key
intuition behind an entire class of defences: if everything worth
protecting is kept at least $br$ rows away from everything an attacker
can hammer, the disturbance never reaches it.

\paragraph{Physical separation defences.}
Defences based on \emph{placement} exploit exactly this locality.
CATT \cite{brasser2017catt} partitions kernel from user memory with
unused \emph{guard rows}, so that user-space hammering cannot reach
kernel rows. ZebRAM \cite{konoth2018zebram} interleaves guard rows
between all data rows (\Cref{fig:separation}).  The correctness claim
of such defences is physical: every row an attacker can reach is
separated from every row worth protecting by more than the
disturbance radius.  In this paper the discipline is captured by a
predicate $\safe(br,locs)$: the victim region generated by an access
to any protected location is disjoint from the protected set $locs$.
In the distance-based instance this means that distinct protected
rows lie at least $br$ apart. But separation is required \emph{within}
the protected set too, because the program's own accesses hammer
their own neighbourhoods.  The guard-row layouts above are the same
discipline applied to the whole address space, with the
attacker-reachable region kept at distance $br$ from everything
protected.

\begin{figure}
  \centering
  \begin{tikzpicture}[
      row/.style={draw=black!60, minimum width=3.2cm, minimum height=0.42cm,
                  font=\small, inner sep=1pt},
      note/.style={font=\small, anchor=west},
      yscale=0.52]
    \node[row, fill=blue!25]  (a0) at (0, 5) {attacker-reachable};
    \node[row, fill=blue!25]  (a1) at (0, 4) {attacker-reachable};
    \node[row, fill=black!10] (g1) at (0, 3) {guard row};
    \node[row, fill=black!10] (g2) at (0, 2) {guard row};
    \node[row, fill=green!25] (p1) at (0, 1) {protected ($x$)};
    \node[row, fill=black!10] (g3) at (0, 0) {guard row};
    \node[row, fill=black!10] (g4) at (0,-1) {guard row};
    \node[row, fill=green!25] (p2) at (0,-2) {protected ($y$)};
    \draw[decorate, decoration={brace, amplitude=5pt}]
      (1.85, 3.5) -- (1.85, 1.5)
      node[midway, xshift=16pt, font=\small] {$\geq br$};
    \draw[decorate, decoration={brace, amplitude=5pt}]
      (1.85, 0.5) -- (1.85, -1.5)
      node[midway, xshift=16pt, font=\small] {$\geq br$};
    \node[note] at (-6.8, 4.5) {hammering here \dots};
    \draw[->, black!70] (-3.9, 4.5) -- (-1.9, 4.5);
    \node[note] at (-7.0, 1) {\dots cannot reach here};
    \draw[->, black!70] (-3.9, 1) -- (-1.9, 1);
  \end{tikzpicture}
  \caption{Physical separation with a blast radius of $br=3$.  Guard
    rows keep every protected row at least $br$ rows away from every
    row the program (or an attacker) accesses, including the
    \emph{other} protected rows, since accesses to $x$ hammer $x$'s own
    neighbourhood.  This is the layout discipline that the predicate
    $\safe(br,locs)$ of \Cref{sec:fault-kernel-contract} abstracts.}
  \label{fig:separation}
\end{figure}

\paragraph{The semantic gap.}
Program verification speaks a different language: commands, stores,
traces, observables.  A proof about a program is a proof about its
semantics, and standard semantics assume that memory changes only
when written.  Rowhammer breaks exactly that assumption, and
placement defences restore it, but no theorem connects the two
levels.  What is missing is a semantics in which the physical claim
(``flips are confined to a blast radius'') and the placement claim
(``protected rows are outside every blast radius'') can both be
stated, and a theorem that together they restore the ordinary meaning
of programs.  This paper supplies both.  The examples below, adapted
from  \cite{naseredini2024towards}, illustrate what goes
wrong without separation, and, read against \Cref{thm:goal-formal},
identify exactly which hypothesis each failure violates.

\begin{example}[Corruption without a write]
\label{ex:corruption-without-write}
Let $x$ and $y$ occupy adjacent rows, and let the program hammer $y$
while never assigning to $x$:
\[
  x := 1;\
  \mathsf{while}\ (0=0)\ \mathsf{do}\ y := y + 1 .
\]
Every access to $y$ disturbs the blast radius of $y$'s row, which
contains $x$.  In every ordinary semantics $x$ holds $1$ forever. Under
Rowhammer, the probability that $x$ still holds $1$ decays with every
iteration.  No analysis that equates ``unwritten'' with ``unchanged''
can see this.
\end{example}

\begin{example}[A dead branch awakens]
\label{ex:dead-branch}
Corruption is not limited to data, it can also redirect control.  Let $x$ and
$y$ be adjacent and consider
\[
  x := y;\
  x := x;\
  \mathsf{if}\ x = y\ \mathsf{then}\ \mathsf{skip}\ \mathsf{else}\ P .
\]
Ordinarily the guard is always true and $P$ is dead code.  But the
reads and writes of $x$ hammer $x$'s neighbourhood, which contains
$y$: a flip in $y$ between the first assignment and the guard makes
the guard false, and the machine executes $P$.  Which code runs is now
probabilistic.  This is why the protected view of
\Cref{sec:protected-view} tracks the \emph{residual program} and the
\emph{access trace}, not just memory: a defence that preserved memory
but not control flow would not restore the meaning of programs.  Note
also that both $x$ and $y$ here are the program's \emph{own}
variables: if the protected set $locs=\{x,y\}$ places them in
adjacent rows, $\safe(br,locs)$ fails, and the theorem correctly does
not apply: separation is required \emph{within} the protected set,
not only between the program and the outside world.
\end{example}

\begin{example}[An unowned access]
\label{ex:unowned-access}
Separation of the program's own rows is not enough either.  Suppose
$locs=\{x\}$ is safely laid out, but the program also touches a row
$z\notin locs$ that happens to be adjacent to $x$:
\[
  z := z + 1;\
  \mathsf{if}\ x = 0\ \mathsf{then}\ \mathsf{skip}\ \mathsf{else}\ P .
\]
The access to $z$ triggers a fault in $z$'s blast radius, about which
$\safe(br,locs)$ says nothing, and that radius contains $x$.  This is
what the well-formedness hypothesis $\wf_{locs}(P)$ of
\Cref{thm:goal-formal} rules out: every location the program touches
must be part of the protected layout, so that every fault the program
can trigger is one the separation condition constrains.
\end{example}

These examples give us the shape of the theorem we need: quantified over
fault behaviour (flip probabilities are the attacker's, not ours),
conclusion covering memory, control, and accesses alike, and
hypotheses (frame-locality, separation, well-formedness) each of
which is violated by some concrete program above when dropped.
 \section{Mathematical preliminaries}\label{section_mathematical_preliminaries}

We write $\POWERSET(S)$ for the \EMPH{powerset} of a set $S$ and
$\POWERSETFIN(S)$ for the set of all finite subsets of $S$.
The Rowhammer semantics assigns to each transition a
\emph{distribution} over successor configurations.  This section fixes
the small amount of probability theory involved. Readers familiar with
the probability monad (also known as Giry monad)
\cite{giry1982categorical,LawvereFW:catopromap} or with monadic
treatments of probabilistic programs
\cite{kozen1981semantics,ramsey2002stochastic,DahlqvistF:sempropagi}
can skim it for notation. All
probability spaces in this paper will be discrete, \IE finite or
countably infinite, so all occurring measure spaces are trivial and we
shall not mention them.  For a type $\alpha$, we write
$\FDist(\alpha)$ for the type of discrete probability distributions on
$\alpha$, \IE functions $\mu:\alpha\to[0,1]$ with $\sum_{x}\mu(x) =
1$. The \emph{support} of $\mu$, written $\mathrm{supp}(\mu)$, is the
set of $x$ with $\mu(x)>0$. The \EMPH{Dirac distribution}, also called
\EMPH{point mass}, at $x\in\alpha$ is the distribution
$\delta_x\in\FDist(\alpha)$ given by $\delta_x(y) = 1$ if $y=x$, and
$\delta_x(y)=0$ otherwise, for all $y \in \alpha$.
A \emph{probability kernel} from $\alpha$ to $\beta$ is an
input-indexed family of distributions, \IE a function $\kappa
:\alpha\to\FDist(\beta).$ For each input $a\in\alpha$, the value
$\kappa (a)$ is therefore a distribution on $\beta$. Any distribution
$\mu\in\FDist(\beta)$ determines the constant kernel $a\mapsto\mu$.
Likewise, a deterministic function $f:\alpha\to\beta$ embeds into the
probabilistic setting as the Dirac kernel $a\mapsto\dirac(f(a))$.  In
this paper, ``kernel'' always means a discrete probability kernel of
the form $\alpha\to\FDist(\beta)$. In the Lean development this is
represented as a function returning a $\PMF$ (probability mass function), rather than by mathlib's
more general measure-theoretic type \texttt{ProbabilityTheory.Kernel}.
Sequential composition of probabilistic computations is monadic.  For
our purposes, a monad consists of a type constructor $M$ together with
operations \EMPH{unit} $\mathsf{return}_{\alpha}:\alpha\to M(\alpha)$
and \EMPH{bind} $\bind: M(\alpha)\to\bigl(\alpha\to M(\beta)\bigr)\to
M(\beta)$, satisfying the left-identity, right-identity, and
associativity laws
\cite{maclane1998categories,moggi1991notions}. Intuitively,
$M(\alpha)$ is a type of computations producing values of type
$\alpha$, $\mathsf{return}$ embeds a pure value, and bind sequences.
The classical probability monad $\GIRY$ instantiates this structure
with probability measures on measurable spaces.  Since all spaces in
this paper are discrete, we work with its discrete analogue, spelled
out next.
On objects it is given by $\GIRY(\alpha)\triangleq\FDist(\alpha).$ On
morphisms, \GIRY\ acts by pushforward of distributions: for $f :
\alpha \rightarrow \beta$, we have $\GIRY(f) : \FDist(\alpha) \rightarrow
\FDist(\beta)$, and for  $\mu \in \FDist(\alpha)$, we set $ \GIRY(f)(\mu)(y)
= \sum_{\substack{x\in\alpha, f(x)=y}} \mu(x)$.  The unit
$\mathsf{return}_{\alpha}:\alpha \rightarrow \GIRY(\alpha)$ is the
Dirac distribution $\delta_x$, in other words, the point mass
concentrated at $x$.  $\GIRY$'s bind operation, given a distribution $
\mu\in\FDist(\alpha) $ and a probability kernel $
\kappa:\alpha\to\FDist(\beta), $ is written
\[
   \mathbf{let}\ x\leftarrow\mu\ \mathbf{in}\ \kappa(x),
\]
and defined pointwise by $ \left(
\mathbf{let}\ x\leftarrow\mu\ \mathbf{in}\ \kappa(x) \right)(y) =
\sum_{x\in\alpha}\mu(x)\,\kappa(x)(y).$ In the Lean development,
$\GIRY$ is represented by $\PMF$.
The Lean development mirrors this notation in the file
\texttt{DiscreteDist.lean}.  There, \texttt{Dist} is an abbreviation
for mathlib's \texttt{PMF}, the type of discrete probability mass
functions.  Thus the formalisation does not assume finite support. The
file provides the small interface used in the proofs: Dirac
distributions, bind, map, support lemmas, and injectivity of Dirac.
 \section{A deterministic small-step operational semantics for a WHILE language}\label{section_language}

The aim of this paper is to express Rowhammer in terms of programming
language semantics, so as to be able to reason about program
correctness and the efficacy of Rowhammer defences using the extensive
tool-sets developed over the previous decades by the programming
language and verification communities, but without having to go
through semiconductor physics, or even knowledge of DRAM
organisation. For this purpose we need a paradigmatic programming
language that is simple enough to allow us to explain the key ideas
without being drowned out by language complexity, but at the same time
general enough to cover interesting Rowhammer-related phenomena. We
choose the well-known WHILE language
\cite{vasconcelos2016language,RiisNielsonH:semwitaaa}. Its syntax is
given by the following rules.
\begin{GRAMMAR}
  e &::=& c \mid x \mid e+e
  \\[2mm]
  b &::=&
    \mathsf{true}
    \mid \mathsf{false}
    \mid \neg b
    \mid b\wedge b
    \mid e=e
  \\[2mm]
  P &::=&
    \mathsf{skip}
    \mid x:=e
    \mid P; P
    \mid \mathsf{if}\ b\ \mathsf{then}\ P\ \mathsf{else}\ P
    \mid \mathsf{while}\ b\ \mathsf{do}\ P
\end{GRAMMAR}

\NI Here $c$ ranges over integers, and $x$ over memory locations,
collectively denoted $\Loc$. It is natural to assume that $\Loc =
\INT$, but nothing in our development depends on that. Since the
language lacks higher-order functions and dynamic memory allocation,
we can associate with each $e$, $b$ and $P$ the set of locations
syntactically mentioned.  This gives rise to the \texttt{WellFormed}
predicates in our Lean formalisation: \EG \texttt{p.WellFormed locs}
means that the locations occurring syntactically in the program
\texttt{p} are contained in the set \texttt{locs}. Clearly
well-formedness is preserved by reduction.

We give a conventional small-step operational semantics with two
minor twists.  The operational semantics is a deliberately
\emph{fine-grained} small-step semantics: arithmetic expressions,
Boolean expressions, and programs all reduce one step at a time, so
every memory read and every write is an individual transition.
Transitions carry a \emph{label} recording the access they perform:
\[
  \ell \ ::=\  \lread(x) \mid \lwrite(x) \mid \tau .
\]
A read of variable $x$ is labelled $\lread(x)$, the write performed by
an assignment to $x$ is labelled $\lwrite(x)$, and every other
transition, operator reduction, branch selection, sequencing, loop
unfolding, is silent ($\tau$). We write $\ACCESS$ for the set of all such labels.
The labels play no role in the
deterministic semantics itself (only assignment writes modify
memory) but help injecting  the Rowhammer faults
(\Cref{section_probabilistic_model_of_rowhammer}).
Congruence rules propagate the label of their
premise.  We use three labelled judgements on configurations
\[
  \langle e,m\rangle \AStepL{\ell} \langle e',m'\rangle
  \qquad\qquad
  \langle b,m\rangle \BStepL{\ell} \langle b',m'\rangle
  \qquad\qquad
  \langle P,m\rangle \PStepL{\ell} \langle P',m'\rangle
\]
In all cases, $m$ ranges over memories, \IE functions of type $\Loc
\to \INT$.  We use $\Mem$ to refer to the set of all memories.
Evaluation is left-to-right, and conjunction is deliberately
non-short-circuiting.  Values and $\mathsf{skip}$ have no outgoing
transition. The relations are deterministic, and every non-terminal
configuration can progress. The definitions are standard, and we
give the most relevant rules next.

\begin{RULES}
  \ZEROPREMISERULE{\langle x,m\rangle
  \AStepL{\lread(x)}
  \langle m(x),m\rangle}
\quad
\ONEPREMISERULE{
  \langle e,m\rangle
  \AStepL{\ell}
  \langle e',m'\rangle
}{
  \langle x:=e,m\rangle
  \PStepL{\ell}
  \langle x:=e',m'\rangle
}
\quad
\ZEROPREMISERULE{
  \langle x:=v,m\rangle
  \PStepL{\lwrite(x)}
  \langle \mathsf{skip},m[x\mapsto v]\rangle
}
\\\\
\ZEROPREMISERULE{
  \left\langle
    \mathsf{while}\ b\ \mathsf{do}\ P,
    m
  \right\rangle
  \PStepL{\tau}
  \left\langle
    \mathsf{if}\ b\ \mathsf{then}\
      \bigl(P;\mathsf{while}\ b\ \mathsf{do}\ P\bigr)\
    \mathsf{else}\ \mathsf{skip},
    m
  \right\rangle
}
\end{RULES}

\NI The full rule sets and this basic metatheory are standard and
given in Appendix \ref{sec:small-step-ordinary}. We write $\PStepT{s}$
for the reflexive and transitive closure of $\PStepL{l}$. The length
$|s|$ counts every transition, expression steps included; the
\emph{access trace} $\mathsf{acc}(s)$ is the subsequence of non-$\tau$
labels, whose length is the number of memory accesses the run
performs.  A program \emph{terminates} with final memory $m'$, written
$\langle P,m\rangle\Downarrow m'$, when $\langle
P,m\rangle\PStepT{s}\langle\mathsf{skip},m'\rangle$ for some $s$; it
\emph{diverges}, written $\langle P,m\rangle\Uparrow$, when for every
$n$ some $s$ with $|s|=n$ extends its run.  The transitions give a
partial function on configurations $\langle P,m\rangle$. It will later
be convenient to extend this to total functions. We do this by
defining a deterministic \emph{absorbing run} denoted $\RunDT{n}(C)$: iterate
the step function $n$ times, letting terminal configurations step to
themselves with label $\tau$, and accumulate the labels.  Its result
is the unique configuration-with-trace reached after $n$ steps.
 \section{A probabilistic model of Rowhammer}\label{section_probabilistic_model_of_rowhammer}

This section addresses \RQONE, \RQTWO\ and \RQTHREE: we now refine the ordinary
semantics from the previous section with probabilistic Rowhammer fault models. We aim for the
interface between the programming language semantics and the fault
model to be minimal and canonical.  The syntax, evaluation order, and
reduction contexts are unchanged, and the refinement touches exactly
one place: the $\lread(x)$- and $\lwrite(x)$-labelled transitions of
\Cref{section_language} additionally sample from the Rowhammer fault model
and produce a probability distribution (\EG bit-flips in successor memories), while
$\tau$-labelled transitions remain deterministic.

With the changes to the read and write functions already explained in the Introduction,
this amounts to changing the state transformer semantics of programs
from type
\[
   \Mem \to \Mem
\]
from \Cref{section_language} into probabilistic state-transformers
\[
\Mem \to \FDist(\Mem)
\]
(and likewise for expressions).  The fault probabilities are lifted in
a canonical way to general programs using the probability monad from
\Cref{section_mathematical_preliminaries}.

\begin{FIGURE}
\IMAGE{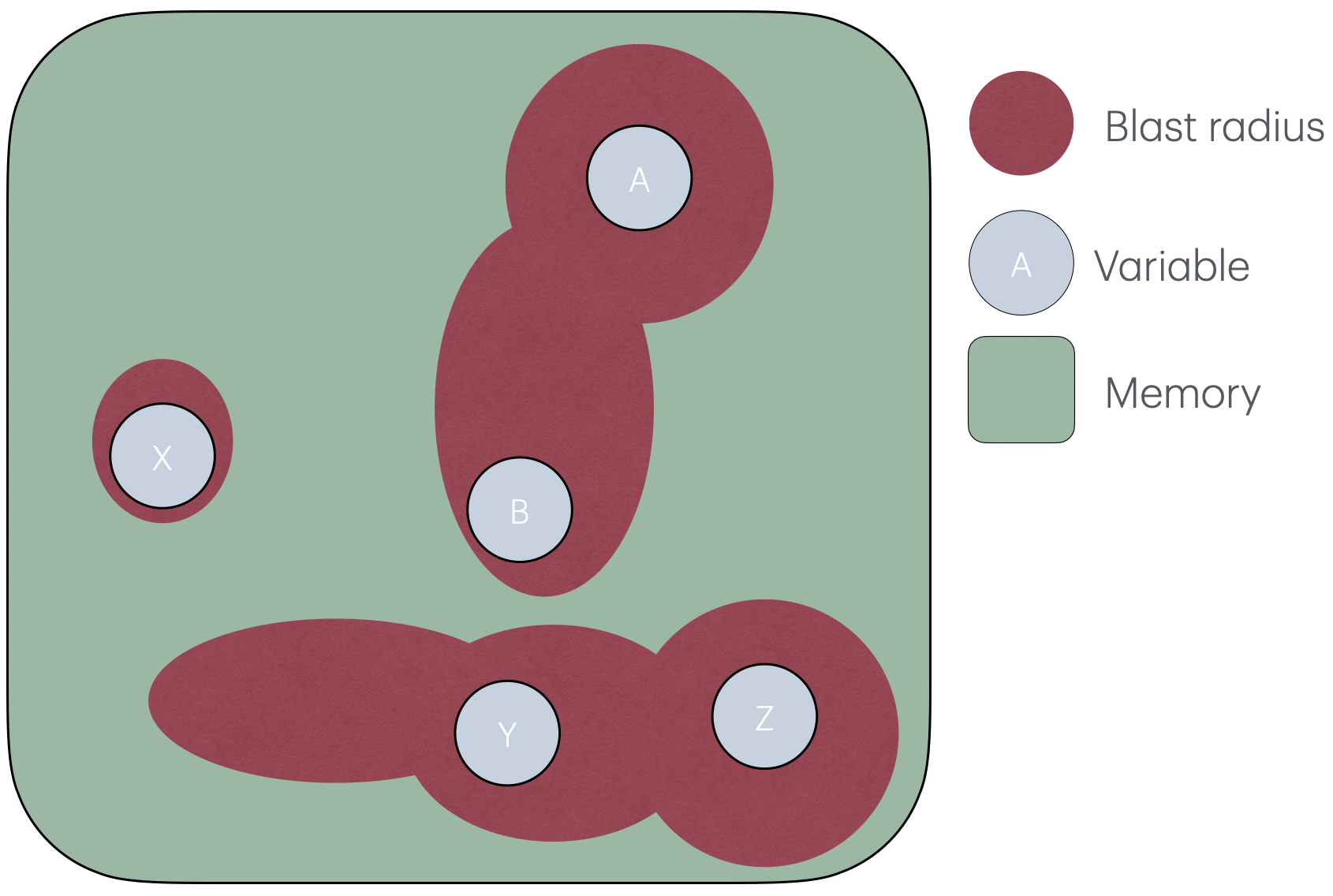}
\caption{The idea behind the fault model is that each variable $x$
  has a 'blast radius', a set of other memory locations that may get
  flipped when $x$ is accessed. Locations outside the blast radius
  will not be flipped. The fault model formalises the shape of each
  blast radius, and the flip probabilities.  Unlike Figures \ref{fig:blast-radius} and \ref{fig:separation}, this image does not show
  row shapes, because our formal model does not assume any specific geometry
  of blast radii.}\label{fig_blast_radius}
\end{FIGURE}

\paragraph{Why reads return a joint distribution.}
There is a small but important typing choice in the Rowhammer semantics.
The ordinary read operation has type
\[
\lread : (\Loc \times \Mem) \to \Val
\]
Since a Rowhammer access to a location $x$ is intended to disturb only
other locations, one might expect the probabilistic read operation to
have type
\[
(\Loc \times \Mem) \to (\Val \times \FDist(\Mem))
\]
returning the value stored at $x$ together with a distribution over
successor memories.  In the model used in this paper, this would
indeed be sufficient for a \EMPH{single} read step: the value returned by the
read is determined by the input memory, and the probabilistic fault only
affects the successor memory.  We instead use the slightly more general
type
\[
\lread : (\Loc \times \Mem) \to \FDist(\Val \times \Mem)
\]
This is the standard form of a probabilistic state transformer with a
return value: it produces a joint distribution over the observed
result and the successor state.  The more specialised type $\Val
\times \FDist(\Mem)$ embeds into this one by pairing every successor
memory with the same returned value.  Using the joint form
$\FDist(\Val \times \Mem)$ avoids baking this special case into the
semantic interface and ensures that read operations share the same
monadic shape as the rest of the semantics. This leaves the
abstraction general enough to model bit-flip errors, where a
physical fault correlates the specific value returned by a noisy read
with the resulting successor memory, without changing the semantic
type of reads.

\subsection{The fault-kernel contract}
\label{sec:fault-kernel-contract}

Fault kernels is our name for the  probability kernels that model Rowhammer. The idea behind fault kernels is
simple: they describe \EMPH{what} memory cells may be affected by
Rowhammer effects when a location $x$ is accessed and \EMPH{how}.
Figure \ref{fig_blast_radius} visualises this intuition.  We formalise
this intuition with the following predicates.
\begin{itemize}

\item $\adj(br,locs,x)$ gives the set of all memory locations within the
  blast radius $br$ of location $x$.

\item $\pflip(S,m) \in \FDist(\Mem)$ is a probability distribution
  that is allowed to flip bits in the finite set $S$ of locations in
  memory $m$. Locations outside of $S$ must remain unchanged.

\end{itemize}
We can now state the modelling assumptions we will use in
\Cref{section_soundness} for proving the soundness of physical
separation as a Rowhammer defence.

\begin{itemize}

\item \textbf{(F1) Access-triggered.}  Faults are sampled exactly at the
  $\lread(x)$- and $\lwrite(x)$-labelled transitions of
  \Cref{section_language}, $\tau$-transitions are fault-free, modelled
  as the Dirac distribution.  (F1) is enforced by construction of the
  rules below, not assumed.

\item \textbf{(F2) Spatial confinement (frame-locality).}  Every memory in the
  support of $\pflip(S,m)$ agrees with $m$ outside the victim set $S$.
  Within $S$ the corruption may assign arbitrary probabilities to
  arbitrary, correlated corruptions: the kernel may behave
  adversarially, subject only to this clause.

\item \textbf{(F3) History-freeness.}  The kernel depends only on the current
  victim set and the current memory: $\pflip$ is a function
  $\POWERSETFIN(\Loc)\to\Mem\to\FDist(\Mem)$, so a fault
  cannot depend on the access history, on time, or on hidden hardware
  state.  This is a genuine modelling restriction. It excludes, for
  example, charge accumulation across repeated activations and
  refresh-cycle effects, generalising to stateful kernels is future
  work, as discussed in \Cref{section_conclusion}.

\item \textbf{(F4) Stationarity.}  The same kernel governs every access of a
  run, so fault behaviour does not drift over time.  Like (F3), this
  is encoded by $\pflip$ being a single function rather than assumed
  as a separate axiom.

\end{itemize}

\NI Clauses (F1-F4), considered in isolation, do not require the
location whose access triggers a fault to lie outside its own victim
set: the abstract contract permits $x\in\adj(br,locs,x)$.  This
generality is not needed for Rowhammer which does not allow
``self-hammering'': the aggressor location cannot itself be a
victim.  Nevertheless, no additional axiom is required for the
executions covered by our main theorem.  Program well-formedness gives
$x\in locs$ for every program access, and safe physical separation
then gives $ x\notin\adj(br,locs,x) $ by instantiating
protected--victim disjointness with the protected location $x$ itself.
Clause (F2) consequently ensures that the fault preserves the accessed
location.  Thus, the abstract contract is slightly more general than
the intended DRAM interpretation, while the non-self-disturbance
property required for protected executions is a derived consequence of
the theorem hypotheses rather than a separate kernel axiom.

\subsection{Physical adequacy and scope of our fault model}

Our semantics is not a physically faithful device-level model of the
electrical mechanisms underlying Rowhammer, nor is it intended to
predict the probability of a bit-flip on a particular DRAM module.
Our abstraction can be read as a conservative attacker model for
Rowhammer-style disturbance: the bit-flips admissible by our model can
be more extreme than those observed in physical DRAM.  Therefore, a
universal theorem proved for every admissible abstract kernel applies
a fortiori to any more constrained concrete device model that refines
it.

\subsection{The fault model, mechanised}
\label{sec:fault-model-lean}

The Lean development renders the contract in three declarations.
Separation is an abstract structure: a blast-radius function, a safety
predicate, and the single fact the soundness proof uses about
them: with physical separation, the victim set of an access to a protected
location is disjoint from the protected set.

\begin{lstlisting}[language=Lean]
structure SeparationModel where
  adj : Nat → LocSet → Loc → LocSet
  isSafe : Nat → LocSet → Prop
  /-- Safe layouts: the victim set of an access to a protected row is
      disjoint from the protected set. -/
  safe_disjoint : ∀ {br : Nat} {locs : LocSet} {x : Loc},
    isSafe br locs → x ∈ locs → ∀ y ∈ locs, y ∉ adj br locs x
\end{lstlisting}
 
Our abstract fault contract does not build the row structure of DRAM
directly into the programming language semantics.  Instead, a
separation model supplies an abstract victim-set function
$\adj(br,locs,x)$, identifying the locations that may be corrupted by
an access to $x$.  The theory does not require this function to be
derived from a metric, or even to be symmetric.  This is both simpler
and more general than fixing a particular DRAM organisation.

The mechanisation includes a concrete distance-based instance in which
locations are natural numbers and the victim set is determined by
numerical distance.  This instance is primarily a simple witness that
the abstract contract is non-vacuous: numerical address distance
should not itself be identified with physical DRAM distance unless
locations have already been interpreted as physical row identifiers.
A row-based model is a special case of the abstract interface:
locations may be taken to denote rows directly, or the victim-set
function may be defined through a mapping from program-visible
locations to DRAM banks and rows.

A fault kernel is a function from victim sets and memories to
distributions over memories.  The \emph{type} already encodes (F3) and
(F4): the kernel sees neither history nor time.

\begin{lstlisting}[language=Lean]
abbrev ProbRHFlip (_sep : SeparationModel) :=
  LocSet → Memory → Dist Memory
\end{lstlisting}
 
\noindent
Clause (F2) is the sole behavioural assumption, stated as a typeclass:

\begin{lstlisting}[language=Lean]
class HasProbFlipFrame (sep : SeparationModel) (f : ProbRHFlip sep) : Prop where
  outside_unchanged : ∀ (S : LocSet) (m m' : Memory) (x : Loc),
    m' ∈ (f S m).support → x ∉ S → m' x = m x
\end{lstlisting}
 
\noindent
Clause (F1) is enforced by construction of the transition rules below.

The contract is not vacuous, and not satisfied only by fault-free
kernels.  The identity kernel \texttt{noFlip} (every fault is a no-op)
satisfies it and embeds the ordinary semantics
(\Cref{sec:goal-formal}). At the other extreme, so does the worst-case
kernel that deterministically corrupts the \emph{entire} victim set at
every access:

\begin{lstlisting}[language=Lean]
noncomputable def corruptAll (sep : SeparationModel) : ProbRHFlip sep :=
  fun S m => dirac (fun x => if x ∈ S then m x + 1 else m x)
\end{lstlisting}
 
\noindent
The mechanisation proves \texttt{corruptAll} frame-local and genuinely
faulting (every outcome differs from the pre-fault memory at every
victim location), and exhibits concrete physical separation of the distance
model together with a well-formed program over it, so every hypothesis
of the goal theorem is satisfiable simultaneously with a kernel that
really flips bits.

\subsection{Probabilistic transitions}
\label{sec:prob-transitions}

The transition judgements
\[
  \langle e,m\rangle \RHAStep \mu,
  \qquad
  \langle b,m\rangle \RHBStep \mu,
  \qquad
  \langle P,m\rangle \RHPStep \mu
\]
return a \emph{distribution} $\mu$ over successor configurations rather
than a single successor, and are otherwise the rules of
\Cref{section_language} with the same left-to-right, non-short-circuit
evaluation order.  Exactly two rules sample the kernel, matching (F1).
A variable read first obtains the value $m(x)$ and then samples the
fault triggered by that access. The sampled fault therefore cannot
retroactively change the value returned by the current read, but it may
change the memory seen by later reads:
\[
\frac{\strut}{
  \langle x,m\rangle
  \RHAStep
  \left(
    \mathbf{let}\ m'\leftarrow
      \pflip\bigl(\adj(br,locs,x),m\bigr)
    \ \mathbf{in}\
    \dirac\bigl(\langle m(x),m'\rangle\bigr)
  \right)
}
\tag{\textsc{RH-A-Read}}
\]
Dually, once the right-hand side of an assignment is a value, the
assignment writes it and samples the fault triggered by the write, on
the \emph{updated} memory:
\[
\frac{\strut}{
  \langle x:=v,m\rangle
  \RHPStep
  \left(
    \mathbf{let}\ m'\leftarrow
      \pflip\bigl(
        \adj(br,locs,x),
        m[x\mapsto v]
      \bigr)
    \ \mathbf{in}\
    \dirac\bigl(\langle\mathsf{skip},m'\rangle\bigr)
  \right)
}
\tag{\textsc{RH-P-Assign}}
\]

\NI Under the well-formedness and physical separation hypotheses of the main
theorem, the accessed location lies outside its victim set.  Hence
frame-locality preserves the stored value at a read location and
preserves the value just installed by an assignment.  This fact is
derived from physical separation. It is not an additional assumption
on the fault kernel.  Every other rule is the Dirac or functorial
lifting of its deterministic counterpart. The full rule sets for all
three judgements are in \Cref{sec:small-step-faulty}.  There is no
transition from $\langle\mathsf{skip},m\rangle$.  The rules are
syntax-directed, so every non-terminal program configuration
determines a unique one-step distribution: the fault randomises the
memory contents, never which access happens next.  The semantics is
therefore probabilistic but not nondeterministic.  We write
$\RHStep(\langle P,m\rangle)$ for the program-level distribution.

\subsection{Finite probabilistic executions}
\label{sec:finite-prob-executions}

For iteration, define an absorbing extension of the one-step kernel by
\begin{itemize}

  \item
$  \RHStepHat\bigl(\langle\mathsf{skip},m\rangle\bigr)
  =
  \dirac\bigl(\langle\mathsf{skip},m\rangle\bigr)$,

\item  $\RHStepHat\bigl(\langle P,m\rangle\bigr)
  =
  \RHStep\bigl(\langle P,m\rangle\bigr)$ assuming  that $P\neq\mathsf{skip}$,

\end{itemize}
and the distribution after exactly $n$ applications of the absorbing
kernel by
\begin{itemize}

  \item $\RunRH{0}(C)=\dirac(C)$,

  \item $ \RunRH{n+1}(C) = \mathbf{let}\ C'\leftarrow\RHStepHat(C)
    \ \mathbf{in}\ \RunRH{n}(C')$.

\end{itemize}
The absorbing extension exists only to define finite runs. The
operational semantics still has no transition from a terminal
configuration.  Because expression reductions are lifted through
assignments and conditionals one step at a time, the index $n$ counts
genuine small-step transitions. A read or write together with its
immediately triggered fault sample counts as one probabilistic
transition.

In the mechanisation the absorbing kernel is a total \emph{function}
(\texttt{stepRHHat}), and the displayed judgement rules are inductive
relations, that the two agree on non-terminal configurations (the
kernel computes exactly the judgement's distribution, and the
judgement's distribution is unique) is proved
(\texttt{pRHStep\_iff\_stepRHHat}), so results stated over the runs
below are statements about the displayed rules.

Because the rules are syntax-directed, \emph{which} access a
configuration performs next is a function of the configuration alone.
The $n$-step run therefore extends canonically to a trace-carrying run
$\RunRHT{n}$ that accumulates the label sequence of
\Cref{section_language} alongside the configuration, assigning to every
outcome its access trace. Absorbing steps at terminal configurations
contribute $\tau$, which the access trace erases.  The goal theorem's
trace agreement (\Cref{sec:goal-formal}) is stated over this run.

\subsection{Mixed termination}
\label{sec:mixed-termination}

Without mitigations, Rowhammer faults may change not only final values
but also whether a program terminates.  Indeed, a single initial
configuration may induce a probabilistic mixture of terminating and
non-terminating executions.  Consider the following program:
\[
   \mathsf{while}\ x>17\ \mathsf{do} \bigl(y:=z+1; x:=x-1\bigr).
\]
In the ordinary semantics, every iteration decreases $x$ by one, so
the program terminates.  Now suppose a Rowhammer bit-flip triggered by
writing $y$ may corrupt $x$.  As a concrete illustration, consider a
kernel that leaves $x$ unchanged with probability $1-p$, changes $x$
to $x+2$ with probability $0 < p < 1$ after the write to $y$, and
otherwise leaves the locations relevant to the example unchanged.
Hence the program terminates with positive probability and runs
forever with positive probability.  This example is deliberately
schematic: it demonstrates behaviour permitted by the abstract fault
contract, not a quantitative claim about a particular DRAM device.

For sufficiently many execution steps, some probability mass may
already be on configurations $\langle\mathsf{skip},m'\rangle$, while
the remaining mass is on non-terminal configurations. The exact
probabilities of non-termination are available only as limits.  We do
not mechanise these limiting probabilities in the present development.
The present finite step probability-mass-function approach is
sufficient for the theorem we establish in later sections. An explicit
account of divergence is possible, but would require additional
semantic infrastructure.

\subsection{Language independence}

The language-parametricity of the model comes from the small semantic
signature required by fault kernels.  To instantiate the interface
for a language, one must identify memories, locations, and the semantic
clauses that perform memory actions.  The ordinary clauses for those
actions are then factored into an intended logical access followed by a
call to the fault kernel determined by the accessed location.  All
other language constructs interact with Rowhammer only through the
memory actions they eventually perform.  Consequently, the interface can
be applied directly to explicitly imperative languages and indirectly to
non-imperative languages by applying it to an intermediate language,
runtime semantics, or machine semantics in which memory actions are
explicit.

The language-independence claim is also supported by recent work on
reusable architecture semantics.  ArchSem \cite{PeramiT:archsem}, for
example, factors instruction set architecture semantics through a
small algebraic-effect interface between ISA descriptions and memory
models.  In that interface, instruction semantics does not directly
manipulate a monolithic global machine state.  Instead, it emits
outcomes such as memory reads and writes, barriers, cache operations,
TLB operations, and exceptions.  The memory-relevant part of the
interface is particularly close to what our model requires: a memory
request records an access kind, address, address space, size, and tag
information, and the outcome interface contains distinguished
memory-read and memory-write effects.  Our Rowhammer interface can be
understood as an analogous factorisation, but at a different semantic
boundary.  ArchSem factors ISA behaviour from the memory and
concurrency model. We factor the physical fault mechanism from the
programming language or machine semantics.  To instantiate our model
for a new language or architecture, one need not rebuild the Rowhammer
theory around that language's syntax.  It suffices to identify the
semantic events that correspond to memory accesses, compute the set of
physical locations exposed by each such access, and interpret the
access through the probabilistic fault kernel.  In a WHILE
language these events are source-level variable reads and writes, but
in an architecture semantics such as ArchSem they would correspond
instead to memory outcomes such as \texttt{MemRead} and
\texttt{MemWrite}.  This is the precise sense in which the interface
is substantially independent of the target programming language: the
language-specific part is the production of memory-access events,
while the Rowhammer-specific part is a reusable interpretation of
those events as possible probabilistic disturbances of physical
memory.
 \section{Soundness of physical separation}\label{section_soundness}

This section addresses our \RQFOUR\ and proves the paper's first
main theorem: under the fault kernel contract of
\Cref{sec:fault-kernel-contract}, physical separation eliminates
Rowhammer faults from being observable in the results of program execution.
We first state the goal informally, then
define the projection being compared, give the formal statement
together with its mechanised form, walk through every hypothesis, and
sketch the proof.

The idea is simple: given a program $P$ such that all variables $locs$
occurring in $P$ are physically separated, meaning that they are all
outside each other's blast radius.  Then the probabilistic semantics of
$\langle P, m \rangle$ when restricted to $locs$ is the Dirac
distribution of the deterministic run of $\langle P, m \rangle$.  All
formal ingredients are already defined, except comparing a
deterministic with a probabilistic run of $\langle P,m\rangle$.

\subsection{The protected view}
\label{sec:protected-view}

We call any element of $(\mathsf{Prog}\times\Mem)\times \ACCESS^{*}$
a \EMPH{configuration with labels}.
Write $\RunDT{n}(C)\in(\mathsf{Prog}\times\Mem)\times \ACCESS^{*}$ for
the deterministic $n$-step absorbing run with its accumulated label
sequence (terminal configurations step to themselves, contributing
$\tau$, \Cref{section_language}), and
$\RunRHT{n}(C)\in\FDist\bigl((\mathsf{Prog}\times\Mem)\times
\ACCESS^{*}\bigr)$ for the trace-carrying Rowhammer run of
\Cref{sec:finite-prob-executions}.  The \emph{protected view} of a
configuration with labels is
\[
  \protV\bigl(\langle P,m\rangle, s\bigr)
  \;\triangleq\;
  \bigl(P,\ m{\restriction}locs,\ \mathsf{acc}(s)\bigr),
\]
the residual program, the restriction of the memory to the protected
locations, and the access trace.  In Lean, the restriction is a total
map on the subtype of protected locations, and the access trace erases
$\tau$:

\begin{lstlisting}[language=Lean]
def protView (locs : LocSet) :
    (Prog × Memory) × List Access →
      Prog × ({x : Loc // x ∈ locs} → Val) × List Access :=
  fun r => (r.1.1, fun x => r.1.2 x.1, accessTrace r.2)
\end{lstlisting}
 
\noindent
These three components are exactly what the program's execution can
depend on.  What the projection discards, unprotected DRAM subject to
Rowhammer outside $locs$, is exactly what a physical separation
defence sacrifices to the fault, and a well-formed program never reads
it.

\subsection{Formal statement}
\label{sec:goal-formal}

We are now in a position to state the soundness theorem, both
informally and with extracted Lean.

\begin{theorem}[Soundness of physical separation, formal]
\label{thm:goal-formal}\label{goal:physical-separation}
Let $\pflip$ satisfy the fault kernel contract
(\Cref{sec:fault-kernel-contract}), let $\safe(br,locs)$ hold, and let
$\wf_{locs}(P)$.  Then for every initial memory $m$ and every
$n\in\Nat$,
\[
  \FDist.map\ \protV\ \bigl(\RunRHT{n}(\langle P,m\rangle)\bigr)
  \;=\;
  \dirac\Bigl(\protV\bigl(\RunDT{n}(\langle P,m\rangle)\bigr)\Bigr).
\]
\end{theorem}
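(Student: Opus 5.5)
The proof goes by a simulation argument: every configuration in the support of the Rowhammer run agrees with the deterministic run on the protected view. The Dirac equality then follows. Define the relation $R$ on configurations with labels by
\[
(\langle P,m\rangle,s)\;R\;(\langle P',m'\rangle,s')
\iff
P=P',\ m{\restriction}locs=m'{\restriction}locs,\ \mathsf{acc}(s)=\mathsf{acc}(s'),\ \wf_{locs}(P).
\]
The main invariant, proved by induction on $n$ and generalised over the initial configuration, is:
\[
\text{every } D\in\mathrm{supp}\bigl(\RunRHT{n}(C)\bigr)\ \text{satisfies}\ D\;R\;\RunDT{n}(C).
\]
Given this invariant, $\FDist.map\ \protV$ of the run has support contained in the singleton $\{\protV(\RunDT{n}(C))\}$. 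A PMF whose support is a singleton is the Dirac at that point, which is the claimed equation.

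The inductive step rests on a one-step lemma. Suppose $\langle P,m\rangle$ and $\langle P,m_0\rangle$ agree on $locs$, $\wf_{locs}(P)$ holds, and $P\neq\mathsf{skip}$. Then every $(\langle P',m'\rangle,\ell)$ in the support of the trace-carrying step $\RHStepHat$ from $\langle P,m\rangle$ satisfies three conditions.
\begin{itemize}
\item $P'$ and $\ell$ coincide with the deterministic successor of $\langle P,m_0\rangle$.
\item $m'$ agrees on $locs$ with the deterministic successor memory.
\item $P'$ is again well-formed, by preservation of well-formedness under reduction (\Cref{section_language}).
\end{itemize}
The lemma is proved by induction on the rule derivations for expressions, Boolean expressions and programs, mutually or in three layers.

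Most cases are routine. $\tau$-rules are Dirac liftings of their deterministic counterparts, and congruence rules follow from the inductive hypothesis pushed through a map. Control decisions (guards, $e=e$) only inspect values that were already produced. The value $m(x)$ returned by a read depends only on $x\in locs$, which holds by well-formedness, so it equals $m_0(x)$. The two sampling rules, \textsc{RH-A-Read} and \textsc{RH-P-Assign}, carry the real content. Any $m'$ in the support of $\pflip(\adj(br,locs,x),\cdot)$ agrees with its argument outside $\adj(br,locs,x)$ by (F2). Since $x\in locs$ and $\safe(br,locs)$ holds, \texttt{safe\_disjoint} yields $locs\cap\adj(br,locs,x)=\emptyset$. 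Hence $m'{\restriction}locs$ equals $m{\restriction}locs$ for a read, and $m[x\mapsto v]{\restriction}locs$ for a write. The latter agrees with $m_0[x\mapsto v]{\restriction}locs$. Labels are syntax-determined and therefore identical. The terminal case $P=\mathsf{skip}$ is absorbing on both sides and contributes $\tau$, which $\mathsf{acc}$ erases.

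The induction over $n$ unfolds $\RunRH{n+1}(C)=\mathbf{let}\ C'\leftarrow\RHStepHat(C)\ \mathbf{in}\ \RunRH{n}(C')$. The support of a bind is the union of the supports of the continuations over the support of the first distribution. I expect the main obstacle here, because the inductive hypothesis must be applied to a \emph{different} starting memory $m'$ that agrees only on $locs$. This is why the invariant has to be stated relative to agreement on $locs$ rather than equality of memories. It also requires a separate lemma: deterministic runs from $locs$-agreeing, well-formed configurations produce the same protected view. That lemma is proved by the same induction specialised to the no-flip case, since well-formed programs never read outside $locs$.

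A secondary nuisance is bookkeeping of trace concatenation: $\mathsf{acc}(\ell\cdot s)=\mathsf{acc}(\ell)\cdot\mathsf{acc}(s)$, together with alignment of the front-accumulated trace in the run definitions. I would handle this by stating the invariant with an arbitrary prefix trace carried along.
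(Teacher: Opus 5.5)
Your proposal is correct and is essentially the paper's proof. The shared core is a one-step agreement lemma in which (F2) composed with \texttt{safe\_disjoint} handles \textsc{RH-A-Read} and \textsc{RH-P-Assign}, and well-formedness makes read values agree; this lemma is iterated by induction on $n$ using well-formedness preservation, and the support characterisation then turns it into the point-mass equation. The paper differs only in two small ways: it states the iterated invariant (\texttt{runRHTrace\_protected}) directly for a Rowhammer memory and a deterministic memory that merely agree on the kept region, started from reflexive agreement, which makes your separate deterministic-run lemma unnecessary; and it tracks equality of the full label lists rather than only of the access traces.
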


\noindent
Equivalently, by the support characterisation of the point mass
(\Cref{section_mathematical_preliminaries}): every outcome in the
support of the $n$-step Rowhammer run has the residual program, the
protected memory, and the access trace of the $n$-step ordinary run.
The mechanised statement quantifies in addition over every separation
geometry:

\begin{lstlisting}[language=Lean]
def PhysicalSeparationSound : Prop :=
  ∀ (sep : SeparationModel) (f : ProbRHFlip sep),
    HasProbFlipFrame sep f →
    ∀ (br : Nat) (locs : LocSet),
      sep.isSafe br locs →
      ∀ (p : Prog), p.WellFormed locs →
        ∀ (m : Memory) (n : Nat),
          Dist.mapD (protView locs)
              (runRHTrace sep f br locs n (p, m))
            = Dist.dirac (protView locs (runDTrace n (p, m)))
\end{lstlisting}
 
\NI It might be of interest to see that all hypotheses
in the mechanised statement are needed, and we walk through them in
order next.

\begin{itemize}

  \item \textbf{\texttt{sep : SeparationModel}.}  The theorem is universal
    over abstract separation geometries (\Cref{sec:fault-model-lean}):
    a blast-radius function $\adj$, a safety predicate $\safe$, and
    the disjointness fact \texttt{safe\_disjoint} connecting them.
    The proof uses nothing else about the geometry, no row
    arithmetic, no linear layout, so the result applies to any
    placement scheme whose safety condition implies the disjointness
    fact, with the distance model of \Cref{sec:fault-model-lean} as
    the motivating instance.

  \item \textbf{\texttt{f : ProbRHFlip sep}.}  The fault kernel.  Its type,
    $\mathsf{LocSet}\to\Mem\to\FDist(\Mem)$, encodes contract clauses
    (F3) and (F4): the kernel can consult only the current victim set
    and memory, and the same kernel governs every access.  Nothing
    about the distribution itself is fixed.

\item\textbf{\texttt{HasProbFlipFrame sep f}.}
Contract clause (F2), spatial confinement is \emph{the sole behavioural
hypothesis}.  Every memory in the support of $\pflip(S,m)$ agrees with
$m$ outside $S$.  Dropping it makes the theorem false immediately: a
kernel that flips a protected location on some access satisfies
everything else, and its Rowhammer run visibly diverges from the
ordinary one on protected memory.

\item \textbf{{\texttt{br} and \texttt{locs}.}}
The blast radius and the protected set.  Both are abstract parameters
threaded through $\adj$ where $locs$ is a \emph{finite} set of locations (a
\textsf{Finset}), the memory locations the program uses for computation.

\item \textbf{{\texttt{sep.isSafe br locs}.}}
Physical separation, the condition the defence establishes, \EG by
guard rows.  Through \texttt{safe\_disjoint} it yields the only fact the
proof needs: the victim set of an access to a protected location is
disjoint from the protected set.  Without it the theorem fails even for
kernels obeying the whole contract: accessing one protected location may
corrupt another protected location inside its blast radius
(\Cref{ex:dead-branch}).

\item \textbf{{\texttt{p.WellFormed locs}.}}
Every location the program reads or writes is in $locs$
(\Cref{section_language}).  This clause makes separation compose with
execution: it guarantees that every access the run performs is an
access to a \emph{protected} location, so \texttt{safe\_disjoint}
applies to every victim set the run generates.  Without it, an access
to some $z\notin locs$ triggers a fault on $\adj(br,locs,z)$, about
which $\safe(br,locs)$ says nothing, the blast radius of an unprotected
location may well intersect $locs$ (\Cref{ex:unowned-access}).

\item \textbf{{\texttt{m} and \texttt{n}.}}
The initial memory and the number of steps, both universally
quantified.  Quantifying over every finite $n$ is the relational-safety
reading of \Cref{goal:physical-separation}: it covers terminating and
diverging programs alike, with no limit construction.

\end{itemize}

\NI It may also be interesting to see what is \emph{not} assumed:
no bound on flip probabilities (faults may be certain), no
independence between locations or between fault events, no
unbiasedness, no particular distribution.  Within the blast radius the
kernel may be adversarial.  The guarantee is distribution-free: it
follows from spatial confinement and separation alone.

\subsection{Corollaries}
\label{sec:corollaries}

Two clauses of \Cref{goal:physical-separation} are worth recording as
standalone consequences of the point mass equation.

\begin{corollary}[Exact termination correspondence]
\label{cor:goal-termination}
Under the hypotheses of \Cref{thm:goal-formal}, if
$\langle P,m\rangle\PStepT{s}\langle\mathsf{skip},m'\rangle$, then every
configuration in the support of $\RunRH{|s|}(\langle P,m\rangle)$ is
terminal and agrees with $m'$ on $locs$: the Rowhammer execution
terminates after exactly the same number of abstract-machine steps, with
the same protected final memory (and, by \Cref{thm:goal-formal}, the same
access trace).
\end{corollary}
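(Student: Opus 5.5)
The plan is to derive the corollary directly from \Cref{thm:goal-formal} instantiated at $n=|s|$, together with a small fact about the deterministic absorbing run. First we establish that if $\langle P,m\rangle\PStepT{s}\langle\mathsf{skip},m'\rangle$, then $\RunDT{|s|}(\langle P,m\rangle)=(\langle\mathsf{skip},m'\rangle,s)$. The argument is by induction on the derivation of $\PStepT{s}$. Each genuine step of the ordinary relation coincides with one step of the absorbing step function, because the relation is deterministic and the absorbing function agrees with it on non-terminal configurations. The accumulated labels are exactly $s$, since no absorbing $\tau$-steps are taken before reaching $\mathsf{skip}$.

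Next, apply \Cref{thm:goal-formal} with $n=|s|$. This gives $\FDist.map\ \protV\ (\RunRHT{|s|}(\langle P,m\rangle))=\dirac(\mathsf{skip},m'{\restriction}locs,\mathsf{acc}(s))$. By the support characterisation of the point mass, every outcome $(\langle Q,m''\rangle,t)$ in the support of $\RunRHT{|s|}(\langle P,m\rangle)$ satisfies $Q=\mathsf{skip}$, $m''{\restriction}locs=m'{\restriction}locs$ and $\mathsf{acc}(t)=\mathsf{acc}(s)$. The justification is that the support of a pushforward is the image of the support, and a Dirac distribution has singleton support.

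Finally, we transfer this statement from the trace-carrying run to $\RunRH{|s|}$. We expect this to be the main obstacle, since it is bookkeeping about definitions rather than mathematics. The plan is to prove, by induction on $n$ and generalising over the starting configuration and the accumulated trace, that $\RunRH{n}(C)=\FDist.map\ \mathrm{fst}\ (\RunRHT{n}(C))$. The inductive step unfolds one bind on each side and uses the monad laws: map distributes over bind, and the trace component accumulated at each step does not affect the configuration component. Since the support of $\mathrm{fst}$ pushed forward is the projection of the support, every configuration in $\mathrm{supp}(\RunRH{|s|}(\langle P,m\rangle))$ is of the form $\langle\mathsf{skip},m''\rangle$ with $m''$ agreeing with $m'$ on $locs$. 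The claim that the execution terminates after exactly $|s|$ steps then follows: all mass is terminal at step $|s|$. For exactness, apply the theorem at each $k<|s|$; there the deterministic residual program is not $\mathsf{skip}$, so no Rowhammer outcome has terminated earlier. The statement about access traces is the third component already obtained above.
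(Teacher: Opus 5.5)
Your proposal is correct and takes essentially the paper's route. The paper likewise combines \Cref{thm:goal-formal} at $n=|s|$ with trace realisation, $\RunDT{|s|}(\langle P,m\rangle)=(\langle\mathsf{skip},m'\rangle,s)$, and the support characterisation of the point mass; it then passes from the trace-carrying run to $\RunRH{|s|}$ as you do. Your extra check at each $k<|s|$ is a sound addition: intermediate deterministic configurations have outgoing transitions and so are not $\mathsf{skip}$, which justifies the word ``exactly'' in the prose, a claim the mechanised statement itself does not include.
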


\begin{lstlisting}[language=Lean]
def TerminationCorrespondence : Prop :=
  ∀ (sep : SeparationModel) (f : ProbRHFlip sep),
    HasProbFlipFrame sep f →
    ∀ (br : Nat) (locs : LocSet),
      sep.isSafe br locs →
      ∀ (p : Prog), p.WellFormed locs →
        ∀ (m m' : Memory) (s : List Access),
          PTrace (p, m) s (Prog.skip, m') →
          ∀ c ∈ (runRH sep f br locs s.length (p, m)).support,
            c.1 = Prog.skip ∧ ∀ x ∈ locs, c.2 x = m' x
\end{lstlisting}
 
\begin{corollary}[Divergence preservation]
\label{cor:goal-divergence}
Under the hypotheses of \Cref{thm:goal-formal}, if
$\langle P,m\rangle\Uparrow$, then for every $n$ every configuration in
the support of $\RunRH{n}(\langle P,m\rangle)$ is non-terminal (and agrees
with the ordinary run on the protected state).
\end{corollary}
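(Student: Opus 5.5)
The plan is to derive Divergence preservation as a direct consequence of \Cref{thm:goal-formal}, with no new probabilistic reasoning. First, unfold divergence. Since the deterministic relations are deterministic and every non-terminal configuration can progress, $\langle P,m\rangle\Uparrow$ implies that for every $n$ the deterministic absorbing run $\RunDT{n}(\langle P,m\rangle)$ coincides with the genuine $n$-step run, and its residual program is not $\mathsf{skip}$. I would prove this by contraposition. Suppose the absorbing run reached $\langle\mathsf{skip},m''\rangle$ after $k\le n$ genuine steps. Then, because there is no transition from $\mathsf{skip}$ and the step relation is deterministic, no run of length $k+1$ could extend from $\langle P,m\rangle$, which contradicts divergence. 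This is the only step that needs a small induction over the absorbing run, relating $\RunDT{n}$ to $\PStepT{s}$ with $|s|=n$. I expect it to be the main, though routine, obstacle: one must check that the absorbing self-loops only ever occur after $\mathsf{skip}$ has been reached.

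Next, transfer the claim to the Rowhammer run. Take any configuration-with-trace $r$ in the support of $\RunRHT{n}(\langle P,m\rangle)$. By \Cref{thm:goal-formal}, $\FDist.map\ \protV$ of this run is the point mass at $\protV(\RunDT{n}(\langle P,m\rangle))$. The support of a map is the image of the support, and the support of $\dirac(y)$ is $\{y\}$. Hence $\protV(r)=\protV(\RunDT{n}(\langle P,m\rangle))$. Reading off the first component gives that the residual program of $r$ equals the deterministic residual program, which is not $\mathsf{skip}$. Reading off the second component gives agreement on $locs$.

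Finally, pass from the trace-carrying run $\RunRHT{n}$ to $\RunRH{n}$. I would use the fact that $\RunRH{n}$ is the pushforward of $\RunRHT{n}$ along the projection that forgets the label sequence. This should follow by induction on $n$ using naturality of bind, since the trace-carrying step merely pairs each outcome with a label determined by the configuration. Every configuration in the support of $\RunRH{n}(\langle P,m\rangle)$ is therefore the first component of some $r$ in the support of $\RunRHT{n}$. That configuration is then non-terminal and agrees with the ordinary run on $locs$, as required.
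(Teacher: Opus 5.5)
Your proposal is correct and follows the paper's route. The paper likewise derives this corollary from the point-mass equation of \Cref{thm:goal-formal} together with determinism, progress, and trace realisation for the ordinary semantics; this matches your argument that the absorbing deterministic run never reaches $\mathsf{skip}$ under divergence, and your reading of the residual program and protected memory off the Dirac support, with the label-forgetting bridge from $\RunRHT{n}$ to $\RunRH{n}$ being the natural extra step since the corollary is stated over the label-free run.
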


\begin{lstlisting}[language=Lean]
def DivergencePreservation : Prop :=
  ∀ (sep : SeparationModel) (f : ProbRHFlip sep),
    HasProbFlipFrame sep f →
    ∀ (br : Nat) (locs : LocSet),
      sep.isSafe br locs →
      ∀ (p : Prog), p.WellFormed locs →
        ∀ (m : Memory), Diverges p m →
          ∀ (n : Nat),
            ∀ c ∈ (runRH sep f br locs n (p, m)).support,
              c.1 ≠ Prog.skip
\end{lstlisting}
 
\subsection{Proof sketch}
\label{sec:proof-sketch}

The proof is a probabilistic simulation whose invariant is
\emph{parametric in the preserved region}: a predicate $\keep$ on
locations, instantiated only at the very end with membership in $locs$.
It proceeds in three main steps.

\emph{One-step invariant} (\texttt{stepRHHat\_agree}).  Suppose the
Rowhammer and ordinary configurations share the same residual program
and their memories agree on $\keep$, and suppose (i) every location the
program uses satisfies $\keep$, and (ii) every fault sample triggered by
an access the program can perform preserves $\keep$-locations.  Then
every outcome of one application of the Rowhammer kernel again shares
the residual program of the ordinary step and agrees with it on
$\keep$.  The residual program is preserved exactly, not just up to
$\keep$, because reads return the same values on $\keep$-agreeing
memories, so both machines make the same control decisions.

\emph{Iteration} (\texttt{runRHTrace\_protected}).  The one-step
invariant is iterated over the step count $n$.  Because the rules are
syntax-directed, the label of the next transition is a function of the
residual program alone. Preserving the residual program exactly
therefore forces both runs to emit the \emph{same label sequence}, which
gives the access-trace clause for free.  Well-formedness is preserved
along the run, so clauses (i) and (ii) remain available at every step.

\emph{Instantiation and conversion.}  Take $\keep(\ell)\triangleq\ell\in
locs$.  Clause (i) is program well-formedness.  Clause (ii) is exactly
frame-locality (F2) composed with \texttt{safe\_disjoint}: an access to
$x\in locs$ triggers a fault confined to $\adj(br,locs,x)$, which is
disjoint from $locs$, so every sample fixes every protected location.
The invariant then says that every outcome in the support of
$\RunRHT{n}$ has the protected view of $\RunDT{n}$. The support
characterisation of \Cref{section_mathematical_preliminaries} converts
this into the point mass equation.  The corollaries follow from the
central equation together with determinism, progress, and the trace
realisation lemma of the ordinary semantics
(\Cref{sec:small-step-ordinary}). The no-fault embedding is by
computation.

We can now show that the deterministic semantics is a special case of
the probabilistic semantics. This gives us confidence in the
correctness of the probabilistic semantics.

\begin{theorem}[No-fault embedding]
\label{thm:no-fault-finite-horizon}
Let $sep$ be any separation model, let $br$ be any blast-radius
parameter, let $locs$ be any protected set, and let $c$ be any initial
configuration.  If the Rowhammer fault kernel is the identity kernel
$\NoFlip(sep)$, then $n$ steps with the Rowhammer semantics is
exactly the Dirac distribution concentrated on the $n$-step
deterministic trace run: This requires neither safe physical
separation nor program well-formedness.
\end{theorem}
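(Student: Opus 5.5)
The plan is to prove the stronger, unprojected statement
\[
  \RunRHT{n}(c) \;=\; \dirac\bigl(\RunDT{n}(c)\bigr)
\]
for the kernel $\NoFlip(sep)$, by induction on $n$ with $c$ universally quantified. No separation or well-formedness hypothesis enters, because $\NoFlip(sep)(S,m)=\dirac(m)$ for every $S$, so the victim set $\adj(br,locs,x)$ plays no role.

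The first step is a one-step lemma: for every configuration $C$,
\[
  \RHStepHat(C)=\dirac(\mathsf{step}(C)),
\]
where $\mathsf{step}$ is the total deterministic absorbing step function of \Cref{section_language}. It should also carry the trace, so that the label emitted by the Rowhammer step equals the deterministic label. I would prove it by structural induction on the program, with nested inductions for arithmetic and Boolean expressions, following the rule sets of \Cref{sec:small-step-faulty}.

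The cases split as follows.
\begin{itemize}
\item \textsc{RH-A-Read} becomes $\mathbf{let}\ m'\leftarrow\dirac(m)\ \mathbf{in}\ \dirac(\langle m(x),m'\rangle)$, which is $\dirac(\langle m(x),m\rangle)$ by the left-identity law of $\GIRY$.
\item \textsc{RH-P-Assign} reduces in the same way to $\dirac(\langle\mathsf{skip},m[x\mapsto v]\rangle)$.
\item All $\tau$-rules are Dirac liftings by (F1), so they agree with the deterministic step directly.
\item The congruence rules are functorial maps of the premise's distribution. Here the induction hypothesis rewrites the premise to a Dirac, and $\FDist.map\,f\,(\dirac\,a)=\dirac(f\,a)$ finishes the case.
\item The terminal case $\mathsf{skip}$ holds by definition of the absorbing extension on both sides.
\end{itemize}
Since the label is a function of the residual program in both semantics, the trace-carrying version follows once both configurations coincide.

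The second step is the induction on $n$. The base case is $\dirac(C)=\dirac(C)$, with the empty trace on both sides. For the successor case,
\[
  \RunRHT{n+1}(C)=\mathbf{let}\ C'\leftarrow\dirac(\mathsf{step}(C))\ \mathbf{in}\ \RunRHT{n}(C').
\]
Left identity turns this into $\RunRHT{n}(\mathsf{step}(C))$, and the induction hypothesis turns that into $\dirac(\RunDT{n}(\mathsf{step}(C)))$. It remains to note that this equals $\dirac(\RunDT{n+1}(C))$, which holds by unfolding the deterministic run.

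The main obstacle I expect is bookkeeping in the trace accumulation. If the two runs accumulate labels in different orders (prepend versus append, or step-first versus step-last unfolding), the successor case will not close syntactically. I would first try to state the induction with an arbitrary accumulated prefix $s$, showing $\RunRHT{n}(C,s)=\dirac(\RunDT{n}(C,s))$ for all $s$, so that the generalised hypothesis absorbs the reassociation.

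A secondary nuisance is the connection between the judgement rules and the function $\mathsf{stepRHHat}$. I would handle it by working with the function throughout and invoking \texttt{pRHStep\_iff\_stepRHHat} only to transfer the statement back to the displayed rules.
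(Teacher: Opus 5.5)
Your proposal is correct and follows essentially the same route as the paper. The paper also inducts on $n$ using a one-step no-fault lemma for the absorbing kernel, the left-identity law for Dirac bind, and $\FDist.map\,f\,(\dirac\,a)=\dirac(f\,a)$ for the label/trace bookkeeping. The paper cites that one-step lemma as already established in the development, whereas you prove it by structural induction over the rule sets. Your observation that $\NoFlip(sep)(S,m)=\dirac(m)$ for every victim set $S$ is exactly why the theorem needs neither safe separation nor well-formedness.
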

This theorem is mechanised in Lean as \texttt{runRHTrace\_noFlip}.
 \providecommand{\Low}{\ensuremath{\mathsf{Low}}}
\providecommand{\High}{\ensuremath{\mathsf{High}}}

\section{Information flow security under Rowhammer}
\label{sec:rowhammer-information-flow-intuition}

We now address our \RQFIVE\ whether the well-developed and widely used
theory of information flow
\cite{goguen1982security,sabelfeld2003language} can be lifted to our
Rowhammer model.  We do this by defining \EMPH{relative
  non-interference} and prove that physical separation guarantees
relative non-interference.  Relative non-interference means that any
information leak that happens with Rowhammer also happens without
Rowhammer\footnote{\STARTCF relative consistency results like
G\"odel's proof that classical logic does not introduce unsoundness
not already present in intuitionistic logic \cite{Godel1933eDoubleNegation},
or the G\"odel / Cohen results that ZFC set theory is relatively
consistent over ZF
\cite{goedel1940consistency,cohen1963independence}.}.
This is of interest, as Rowhammer is usually presented as an integrity
failure: an access to one physical row may corrupt data stored in
another.  But the same mechanism can also create a confidentiality
failure:  a Rowhammer induced flip in a high location may alter a branch condition,
loop condition, or value that is later copied to low memory. Moreover,
a fault in a low location may directly change an attacker-visible
result.  The theory of information flow is tailor-made for studying
such phenomena, and provides a natural test of whether physical
separation removes the semantic consequences of Rowhammer rather than
merely making faults less likely.

\paragraph{Low equivalence and non-interference.}
In order to explain relative non-interference in detail, we need to sketch the key ideas
behind information flow security, see
\cite{goguen1982security,sabelfeld2003language} for details.  Fix a
policy that classifies every memory location as either \Low{} or
\High{}.  Two memories are \emph{low equivalent} when they agree at
every \Low{} location (they  may differ arbitrarily at \High{} locations).
A program is \EMPH{non-interfering} when executions from
low-equivalent initial memories end with memories that are also
low-equivalent (we discuss termination later).  Non-interference is
therefore a \EMPH{hyperproperty} \cite{clarkson2010hyperproperties}:
it relates pairs of executions rather than inspecting one execution in
isolation. The ordinary semantics is deterministic, so
indistinguishability means equality of the two observations.  The
Rowhammer semantics maps an initial memory to a probability
distribution over executions.  Probability-sensitive non-interference
consequently requires equality of the induced distributions over low
observations.

\paragraph{What physical separation can and cannot guarantee.}
The physical separation discussed in previous sections
must protect every location used by the program, not only
its low locations.  Protecting only low variables is insufficient: corrupting
a high variable may change control flow and thereby alter a later low write,
low access, or termination behaviour.  The separation condition used in this
paper therefore protects the complete set of locations over which the
program is well formed.  That means we can only give a
\emph{no-new-leaks} guarantee: physical separation does not make a
program secure that is insecure in the deterministic
semantics! For example,
\[
  L := H
\]
explicitly leaks a high value into a low variable and is therefore
interfering in both semantics.  Physical separation only ensures that
the Rowhammer semantics has exactly the same protected observations as
the ordinary semantics.  Hence, with physical separation, a Rowhammer
information flow counterexample is only a counterexample that was
already present without faults.
Without separation this reasoning fails.  If an access to a program's
location can disturb another location, a frame-local fault kernel may
change a low variable or a high value controlling future low behaviour.  The
fault kernel may assign arbitrary probabilities to such corruptions inside the
victim set.  Safe separation rules out this channel structurally: the victim
set generated by an access to a protected location is disjoint from every
protected location.
 \section{Mechanised observation-parametric non-interference}
\label{sec:rowhammer-noninterference-formalism}

This section formalises and proves our relative non-interference
results.  The Rowhammer semantics of previous sections remains
unchanged.  Our central design choice is that every observer consumes
the protected view already covered by the physical separation theorem.
Thus the security results are consequences of semantic collapse proven
already, rather than a second operational simulation argument. This
gives us additional confidence in our mathematical rendering of
Rowhammer as a programming language feature.

\subsection{Formal development}
\paragraph{Observation-parametric security.}
There is no single universally appropriate notion of low observation (\EG is termination
low observable).  The mechanised
development therefore parameterises non-interference by a function on the
existing protected view, which contains the residual program, protected
memory, and access trace.  We define four concrete observers:
\begin{itemize}

\item Protected low memory only

\item The address trace of reads and writes to low locations

\item Terminal status, protected low memory, and the low access trace

\item Protected low memory and the low access trace, but not terminal
  status
\end{itemize}
The third observer is progress-sensitive because observations are compared at
every finite number of steps: it can reveal the exact abstract-machine step at which
the residual program first becomes \texttt{skip}.  Omitting the terminal-status
bit does not by itself yield classical termination-insensitive
non-interference, since low memory or low accesses observed at successive
steps may still reveal progress.  The mechanisation therefore makes no
claim that its fourth observer is a complete formalisation of classical
termination-insensitive non-interference.

\paragraph{Security policies and low equivalence, protected views and observations.}
Let $ \SecLevel \triangleq \{\Low,\High\} $ and let a security policy
be a function $ \gamma:\Loc\to\SecLevel$.  The low and high location
subtypes are $ L_\gamma \triangleq \{x:\Loc\mid\gamma(x)=\Low\}, $ and
$ H_\gamma \triangleq \{x:\Loc\mid\gamma(x)=\High\}$. For $m\in\Mem$,
the two restricted views are
\begin{itemize}

\item $\LowView_\gamma(m)(x) \triangleq m(x)$ assuming $(x\in
  L_\gamma)$

\item $\HighView_\gamma(m)(x) \triangleq m(x)$ assuming $x\in
  H_\gamma$.

\end{itemize}
Two memories are low equivalent when
\[
  m_1\LowEqRel{\gamma}m_2
  \quad\Longleftrightarrow\quad
  \forall x:\Loc.\;
  \gamma(x)=\Low\Longrightarrow m_1(x)=m_2(x).
\]
The mechanisation proves that this pointwise definition is equivalent to
\[
  \LowView_\gamma(m_1)=\LowView_\gamma(m_2),
\]
and proves reflexivity, symmetry, and transitivity.
Let $\mathsf{Access}$ contain labels $\mathsf{read}(x)$,
$\mathsf{write}(x)$, and $\tau$.  The function $\AccessTrace$ filters out
$\tau$.  For the finite protected set $locs$, define
\[
  \PView(locs)
  \triangleq
  \mathsf{Prog}
  \times
  \bigl(\{x:\Loc\mid x\in locs\}\to\Val\bigr)
  \times
  \mathsf{List}(\mathsf{Access}).
\]
The protected projection is
\[
  \ProtView_{locs}((Q,m),s)
  \triangleq
  \bigl(Q,\,m|_{locs},\,\AccessTrace(s)\bigr).
\]
It retains exactly the three components covered by physical semantic
collapse: residual syntax, protected memory, and the access trace.
The deterministic protected view at step count $n$ is
\[
  \DProtectedView{n}_{locs}(P,m)
  \triangleq
  \ProtView_{locs}\bigl(\RunDTrace{n}(P,m)\bigr).
\]
The Rowhammer protected-view distribution is
\[
\begin{split}
  \RHProtectedView{n}_{sep,f,br,locs}(P,m)
  \triangleq
  \FDist.\mathsf{map}\;\ProtView_{locs}
  \bigl(\RunRHTrace{n}_{sep,f,br,locs}(P,m)\bigr).
\end{split}
\]

\paragraph{Non-interference definitions.}
Recall that we do not hard-code just one specific notion of how the
program execution is observed. Instead, let $\Omega$ be an observation
type.  A protected observer is a function
\[
  O:\PView(locs)\to\Omega.
\]
The two observations used in the definitions below are
\[
  \ObsOrd{n}_{locs,O}(P,m)
  \triangleq
  O\bigl(\DProtectedView{n}_{locs}(P,m)\bigr)
\]
and
\[
\begin{split}
  \ObsRH{n}_{sep,f,br,locs,O}(P,m)
  \triangleq
  \FDist.\mathsf{map}\;O
  \bigl(\RHProtectedView{n}_{sep,f,br,locs}(P,m)\bigr).
\end{split}
\]
See the Lean definitions \texttt{rowhammerProtectedView} and
\texttt{rowhammerObservation} that match this two-stage definition.
We can now define our notion of observer parametric non-interference.

\begin{definition}[Ordinary step bound non-interference]
\label{def:ordinary-observation-ni}
The predicate $\OrdNI(\gamma,locs,O,P)$ holds when
\[
\begin{split}
  \forall n\in\Nat.\;\forall m_1,m_2\in\Mem.\quad
  m_1\LowEqRel{\gamma}m_2
  \Longrightarrow
  \ObsOrd{n}_{locs,O}(P,m_1)
  =
  \ObsOrd{n}_{locs,O}(P,m_2).
\end{split}
\]
This is the Lean definition \texttt{OrdinaryNI}.
\end{definition}

\begin{definition}[Rowhammer step bound non-interference]
\label{def:rowhammer-observation-ni}
For a fixed separation model $sep$, fault kernel $f$, blast radius $br$, and
protected set $locs$, the predicate
$\RHNI(\gamma,sep,f,br,locs,O,P)$ holds when
\[
\begin{split}
  \forall n\in\Nat.\;\forall m_1,m_2\in\Mem.\quad
  m_1\LowEqRel{\gamma}m_2
  \Longrightarrow
  \ObsRH{n}_{sep,f,br,locs,O}(P,m_1)
  =
  \ObsRH{n}_{sep,f,br,locs,O}(P,m_2).
\end{split}
\]
The equality is equality of complete finite distributions.  The same fixed
kernel $f$ is used in both runs.  This is the Lean definition
\texttt{RowhammerNI}.
\end{definition}

\begin{definition}[Robust Rowhammer non-interference]
\label{def:robust-rowhammer-ni}
The predicate $\RobustRHNI(\gamma,sep,br,locs,O,P)$ holds when
\[
  \forall f.\;
  \mathsf{HasProbFlipFrame}(sep,f)
  \Longrightarrow
  \RHNI(\gamma,sep,f,br,locs,O,P).
\]
Safety of the layout and well-formedness of $P$ are deliberately not fields of
this definition, they are hypotheses of the preservation theorem.  This
matches the Lean definition \texttt{RobustRowhammerNI}.
\end{definition}

For convenience, the mechanisation also defines
\[
\begin{split}
  \Admissible(sep,f,br,locs,P)
  \quad\Longleftrightarrow\quad{}
  &\mathsf{HasProbFlipFrame}(sep,f)
  \\
  &{}\land\;sep.\mathsf{isSafe}(br,locs)
  \\
  &{}\land\;P.\mathsf{WellFormed}(locs).
\end{split}
\]
The checked theorem statements below take these three assumptions as
separate arguments, exactly as their Lean counterparts do. We can now
define the four concrete checked observers, promised above. Towards
that aim, let
\[
  L_{\gamma,locs}
  \triangleq
  \{x:\Loc\mid x\in locs\land\gamma(x)=\Low\}.
\]
The development defines the following protected observers.

\begin{itemize}

\item \textbf{Protected-low-memory observer.}  It hides residual
  syntax, terminal status, and the access trace:
  \[
  O^{\mathsf{mem}}_{\gamma,locs}(Q,\bar m,s)
  \triangleq
  \bar m|_{L_{\gamma,locs}}.
\]

\item \textbf{Low-access observer.}  Since labels contain addresses
  but not transferred values, this observer captures address-trace
  leakage rather than value-labelled I/O leakage.
\[
  O^{\mathsf{access}}_\gamma(Q,\bar m,s)
  \triangleq
  \LowAccessTrace_\gamma(s),
\]
where the filter retains $\mathsf{read}(x)$ and $\mathsf{write}(x)$ exactly
when $\gamma(x)=\Low$.

\item \textbf{Progress-sensitive observer.}  Because the definition compares every finite run, this
observer reveals exact abstract-machine progress, including the first step-count
at which termination is visible.
\[
\begin{split}
  O^{\mathsf{progress}}_{\gamma,locs}(Q,\bar m,s) \triangleq \bigl(
  \Terminal(Q), \bar m|_{L_{\gamma,locs}}, \LowAccessTrace_\gamma(s)
  \bigr),
\end{split}
\]
where $\Terminal(Q)$ is the Boolean that is true exactly for
$Q=\mathsf{skip}$.

\item \textbf{Terminal-status-hidden observer.}  This omits the
  terminal-status bit but is not claimed to formalise classical
  termination-insensitive non-interference: low memory or low accesses
  at fixed step count may still reveal progress.
  \[
  O^{\mathsf{hidden}}_{\gamma,locs}(Q,\bar m,s)
  \triangleq
  \bigl(\bar m|_{L_{\gamma,locs}},\LowAccessTrace_\gamma(s)\bigr).
  \]

\end{itemize}

\paragraph{Checked theorem statements.}

The following statements are proved in Lean without \texttt{sorry} and without
new problem-specific axioms.

\begin{theorem}[Protected semantic collapse]
\label{thm:ni-protected-collapse}
Let $f$ be frame-local, let $sep.\mathsf{isSafe}(br,locs)$ hold, and let
$P.\mathsf{WellFormed}(locs)$ hold.  Then, for every memory $m$ and
$n$,
\[
  \RHProtectedView{n}_{sep,f,br,locs}(P,m)
  =
  \dirac\bigl(\DProtectedView{n}_{locs}(P,m)\bigr).
\]
This is theorem \texttt{protected\_semantic\_collapse}, a direct restatement
of the existing theorem \texttt{physical\_separation\_sound}.
\end{theorem}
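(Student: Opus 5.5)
The plan is to unfold the two definitions and observe that the statement is then literally \Cref{thm:goal-formal}. By definition,
\[
  \RHProtectedView{n}_{sep,f,br,locs}(P,m)
  = \FDist.\mathsf{map}\;\ProtView_{locs}\bigl(\RunRHTrace{n}_{sep,f,br,locs}(P,m)\bigr),
\]
and the right-hand side of the claim is $\dirac\bigl(\ProtView_{locs}(\RunDTrace{n}(P,m))\bigr)$. Here $\ProtView_{locs}$ is the projection $\protV$ of \Cref{sec:protected-view}. The runs $\RunRHTrace{n}$ and $\RunDTrace{n}$ are the runs $\RunRHT{n}$ and $\RunDT{n}$ used in \Cref{thm:goal-formal}, with the separation model and kernel made explicit.

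First, I will check that the Lean definitions \texttt{rowhammerProtectedView} and the deterministic protected view reduce to \texttt{Dist.mapD (protView locs) (runRHTrace \dots)} and \texttt{protView locs (runDTrace n (p,m))} by \texttt{rfl} or \texttt{unfold}.

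Second, I will instantiate \texttt{PhysicalSeparationSound}, that is \texttt{physical\_separation\_sound}, with $sep$, $f$, the frame-locality instance, $br$, $locs$, $\mathsf{isSafe}(br,locs)$, $P.\mathsf{WellFormed}(locs)$, $m$ and $n$. The three hypotheses of the present theorem are exactly the three hypotheses of \Cref{thm:goal-formal}, so no further argument is needed.

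The only possible obstacle is definitional. The projection used here might be stated on $\PView(locs)$ with a subtype-valued memory component that differs syntactically from $\protV$, for instance in argument order or in how the restriction $m|_{locs}$ is packaged. If so, I would first prove a small lemma that the two projections are pointwise equal, by function extensionality on the subtype. I would then rewrite under $\FDist.\mathsf{map}$ using congruence of map in its function argument.

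No new simulation argument is needed. All the substance, namely the one-step $\keep$-invariant, its iteration, and the support characterisation of the point mass, is already contained in the proof of \Cref{thm:goal-formal}.
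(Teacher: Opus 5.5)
Your proposal is correct and follows the paper's own route: the paper presents \texttt{protected\_semantic\_collapse} as a direct restatement of \texttt{physical\_separation\_sound}. It unfolds $\RHProtectedView{n}$ and $\DProtectedView{n}$ into $\FDist.\mathsf{map}\;\ProtView_{locs}$ over the Rowhammer trace run and $\ProtView_{locs}$ of the deterministic trace run, and then applies \Cref{thm:goal-formal} under the same three hypotheses. The definitional obstacle you anticipate does not arise, because $\ProtView_{locs}$ and $\protV$ are the same projection (\texttt{protView}).
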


\begin{theorem}[Observation collapse]
\label{thm:ni-observation-collapse}
Under the same frame-locality, safety, and well-formedness assumptions, for
every protected observer $O$, memory $m$, and step count $n$,
\[
  \ObsRH{n}_{sep,f,br,locs,O}(P,m)
  =
  \dirac\bigl(\ObsOrd{n}_{locs,O}(P,m)\bigr).
\]
This is theorem \texttt{observation\_collapse}.
\end{theorem}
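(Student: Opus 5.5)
The proof is a short calculation that reduces observation collapse to \Cref{thm:ni-protected-collapse} by pushing the observer $O$ through the Dirac distribution. No new simulation argument is needed.

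First, unfold the definition of $\ObsRH{n}_{sep,f,br,locs,O}(P,m)$:
\[
  \ObsRH{n}_{sep,f,br,locs,O}(P,m)
  = \FDist.\mathsf{map}\;O\bigl(\RHProtectedView{n}_{sep,f,br,locs}(P,m)\bigr).
\]
Under the stated frame-locality, safety and well-formedness hypotheses, \Cref{thm:ni-protected-collapse} rewrites the inner distribution to $\dirac\bigl(\DProtectedView{n}_{locs}(P,m)\bigr)$. This leaves the term
\[
  \FDist.\mathsf{map}\;O\bigl(\dirac(\DProtectedView{n}_{locs}(P,m))\bigr).
\]

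Second, apply the naturality of the unit of the probability monad. For any function $g$ and point $x$, pushforward satisfies $\FDist.\mathsf{map}\,g\,(\dirac\,x)=\dirac(g\,x)$. In mathlib this is the standard lemma that mapping a pure PMF gives a pure PMF (\texttt{PMF.pure\_map}). It can also be read off the pushforward formula of \Cref{section_mathematical_preliminaries}: the sum $\sum_{x', g(x')=y}\dirac_x(x')$ equals $1$ exactly when $y=g(x)$ and $0$ otherwise. In the Lean development I would first state this as a small helper lemma about \texttt{Dist.mapD} and \texttt{Dist.dirac} in \texttt{DiscreteDist.lean}, if it is not already there. Instantiating $g:=O$ and $x:=\DProtectedView{n}_{locs}(P,m)$ yields $\dirac\bigl(O(\DProtectedView{n}_{locs}(P,m))\bigr)$, which is $\dirac\bigl(\ObsOrd{n}_{locs,O}(P,m)\bigr)$ by definition.

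There is no real obstacle here. The only care needed is definitional bookkeeping. The two-stage definition of $\ObsRH{n}$ (\texttt{rowhammerObservation} defined via \texttt{rowhammerProtectedView}) must unfold so that the collapse theorem rewrites syntactically. The observation type $\Omega$ must also be universe-polymorphic enough for \texttt{Dist.mapD} to apply. If the rewrite fails to fire, I would prove the statement by \texttt{simp only} with the unfolded definitions followed by \texttt{rw [protected\_semantic\_collapse]}.
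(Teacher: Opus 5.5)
Your proposal is correct and follows the paper's proof exactly. Both unfold $\ObsRH{n}$ to a map of $O$ over the Rowhammer protected-view distribution, rewrite that distribution with \Cref{thm:ni-protected-collapse}, and finish with the law $\FDist.\mathsf{map}\;O\;(\dirac(v))=\dirac(O(v))$.
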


\begin{theorem}[Non-interference equivalence]
\label{thm:ni-equivalence}
Let $f$ be frame-local, let $sep.\mathsf{isSafe}(br,locs)$ hold, and let
$P.\mathsf{WellFormed}(locs)$ hold.  Then, for every security policy $\gamma$
and protected observer $O$,
\[
  \OrdNI(\gamma,locs,O,P)
  \quad\Longleftrightarrow\quad
  \RHNI(\gamma,sep,f,br,locs,O,P).
\]
This is theorem \texttt{noninterference\_equivalence}.
\end{theorem}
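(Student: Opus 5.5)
The plan is to reduce the equivalence to the observation collapse of \Cref{thm:ni-observation-collapse} and the injectivity of Dirac distributions provided in \texttt{DiscreteDist.lean}. No new operational reasoning should be needed. Fix $\gamma$ and $O$, and assume frame-locality of $f$, $sep.\mathsf{isSafe}(br,locs)$ and $P.\mathsf{WellFormed}(locs)$. The hypotheses of \Cref{thm:ni-observation-collapse} mention neither the initial memory nor the step count, so it applies uniformly: for every $n$ and every $m$,
\[
  \ObsRH{n}_{sep,f,br,locs,O}(P,m)=\dirac\bigl(\ObsOrd{n}_{locs,O}(P,m)\bigr).
\]

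\emph{Forward direction.} Assume $\OrdNI(\gamma,locs,O,P)$, and let $n$, $m_1\LowEqRel{\gamma}m_2$ be given. Rewrite both sides of the $\RHNI$ goal with the collapse equation. The goal becomes $\dirac(\ObsOrd{n}(P,m_1))=\dirac(\ObsOrd{n}(P,m_2))$, which follows by congruence from the $\OrdNI$ hypothesis at $n,m_1,m_2$.

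\emph{Backward direction.} Assume $\RHNI(\gamma,sep,f,br,locs,O,P)$, and let $n$, $m_1\LowEqRel{\gamma}m_2$ be given. Instantiate $\RHNI$ at the same data and rewrite with the collapse equation on both sides. This yields $\dirac(\ObsOrd{n}(P,m_1))=\dirac(\ObsOrd{n}(P,m_2))$ as distributions on $\Omega$. Injectivity of $\dirac$ then gives $\ObsOrd{n}(P,m_1)=\ObsOrd{n}(P,m_2)$. Note that this direction uses the same fixed kernel $f$ for both runs, exactly as in \Cref{def:rowhammer-observation-ni}.

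There is no real mathematical obstacle here. The only point needing care is the backward direction, which must use injectivity of the point mass, i.e.\ that $\dirac(a)=\dirac(b)$ implies $a=b$. This holds because $\dirac(a)(a)=1$, so $a$ lies in the support of $\dirac(b)$, which is $\{b\}$; this should already be among the listed support lemmas. A second detail is to check that \Cref{thm:ni-observation-collapse} is stated for arbitrary observation types $\Omega$, so that no decidable-equality or finiteness assumption on $\Omega$ creeps in. In Lean, I would expect the proof to be a short \texttt{constructor}, \texttt{intro}, \texttt{rw [observation\_collapse ...]} script, with \texttt{Dist.dirac\_injective} closing the reverse implication.
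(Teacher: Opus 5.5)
Your proposal is correct and follows essentially the same route as the paper. The paper also rewrites both Rowhammer observations with observation collapse (Theorem~\ref{thm:ni-observation-collapse}), obtains the forward direction by congruence of Dirac distributions, and obtains the backward direction by injectivity of the point mass.
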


\begin{corollary}[Robust preservation]
\label{cor:ni-robust-preservation}
If $sep.\mathsf{isSafe}(br,locs)$,
$P.\mathsf{WellFormed}(locs)$, and \newline $\OrdNI(\gamma,locs,O,P)$ hold, then
\[
  \RobustRHNI(\gamma,sep,br,locs,O,P).
\]
Equivalently, every frame-local kernel $f$ satisfies
\[
  \RHNI(\gamma,sep,f,br,locs,O,P).
\]
This is theorem \texttt{robust\_noninterference\_preservation}.
\end{corollary}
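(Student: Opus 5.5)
The plan is to derive the corollary directly from \Cref{thm:ni-equivalence}, with no new operational argument. Fix $sep$, $br$, $locs$, $\gamma$, $O$ and $P$ with $sep.\mathsf{isSafe}(br,locs)$, $P.\mathsf{WellFormed}(locs)$ and $\OrdNI(\gamma,locs,O,P)$. Unfolding \Cref{def:robust-rowhammer-ni}, we must show $\RHNI(\gamma,sep,f,br,locs,O,P)$ for an arbitrary kernel $f$ satisfying $\mathsf{HasProbFlipFrame}(sep,f)$. This unfolding is also all that the stated equivalence between the robust formulation and the ``every frame-local kernel'' formulation amounts to.

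For such an $f$, the three hypotheses of \Cref{thm:ni-equivalence} hold: frame-locality from the assumption on $f$, safety and well-formedness from the corollary's hypotheses. Its left-to-right direction applied to $\OrdNI(\gamma,locs,O,P)$ gives $\RHNI(\gamma,sep,f,br,locs,O,P)$, and since $f$ was arbitrary this is $\RobustRHNI$. The key point is that the safety and well-formedness hypotheses do not mention $f$, so they can be discharged once, outside the quantifier over kernels.

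If one wants to avoid citing the equivalence and unfold one level further, the argument is just as short. Take $n$ and $m_1\LowEqRel{\gamma}m_2$. By \Cref{thm:ni-observation-collapse}, $\ObsRH{n}(P,m_i)=\dirac(\ObsOrd{n}(P,m_i))$ for $i=1,2$. Since $\OrdNI$ gives $\ObsOrd{n}(P,m_1)=\ObsOrd{n}(P,m_2)$, the two Dirac distributions coincide.

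There is no genuine obstacle; all the probabilistic content sits in \Cref{thm:goal-formal}. The only care needed is bookkeeping: the kernel quantifier in $\RobustRHNI$ must range over kernels for the same $sep$, so that $sep.\mathsf{isSafe}(br,locs)$ remains the relevant hypothesis for every instantiated $f$.
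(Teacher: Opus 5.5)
Your proposal is correct and takes the same approach as the paper. You introduce an arbitrary frame-local kernel and apply the forward direction of Theorem~\ref{thm:ni-equivalence}, with safety and well-formedness supplied by the corollary's own hypotheses; your optional unfolded version via Theorem~\ref{thm:ni-observation-collapse} and Dirac congruence is just that forward direction inlined, exactly as the paper proves it.
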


\begin{corollary}[Reflection for a fixed kernel]
\label{cor:ni-reflection-fixed}
Under frame-locality, safety, and well-formedness,
\[
  \neg\RHNI(\gamma,sep,f,br,locs,O,P)
  \Longrightarrow
  \neg\OrdNI(\gamma,locs,O,P).
\]
This is theorem \texttt{rowhammer\_interference\_reflection}.
\end{corollary}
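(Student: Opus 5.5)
The statement is Corollary~\ref{cor:ni-reflection-fixed}. It is the contrapositive of one direction of Theorem~\ref{thm:ni-equivalence}, so I would derive it by pure propositional reasoning rather than by any new semantic argument.

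Fix $\gamma$, $sep$, $f$, $br$, $locs$, $O$ and $P$, and assume the three admissibility hypotheses: $\mathsf{HasProbFlipFrame}(sep,f)$, $sep.\mathsf{isSafe}(br,locs)$, and $P.\mathsf{WellFormed}(locs)$. Theorem~\ref{thm:ni-equivalence} applies directly under these hypotheses and yields
\[
  \OrdNI(\gamma,locs,O,P)\Longleftrightarrow\RHNI(\gamma,sep,f,br,locs,O,P).
\]
I only need its left-to-right implication. To prove the corollary, assume $\neg\RHNI(\gamma,sep,f,br,locs,O,P)$ and, for contradiction, suppose $\OrdNI(\gamma,locs,O,P)$. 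The forward direction then gives $\RHNI(\gamma,sep,f,br,locs,O,P)$, which contradicts the assumption. This is intuitionistically valid and needs no excluded middle. In Lean it is a one-liner: \texttt{fun h hOrd => h ((noninterference\_equivalence ...).mp hOrd)}.

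For robustness I would also note why the forward direction holds, in case one prefers to avoid the full equivalence. By Theorem~\ref{thm:ni-observation-collapse}, $\ObsRH{n}(P,m_i)=\dirac(\ObsOrd{n}(P,m_i))$ for $i=1,2$. If the ordinary observations agree, the two Dirac distributions are equal, which gives $\RHNI$. The reverse direction, used only for the equivalence and not needed here, relies on injectivity of $\dirac$.

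There is no real obstacle. The only care needed is to instantiate the corollary with the same fixed kernel $f$ used in $\RHNI$, and to pass the frame-locality, safety and well-formedness hypotheses explicitly, since they are not bundled into the $\RHNI$ definition.
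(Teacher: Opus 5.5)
Your proposal is correct and follows the paper's own route: the paper also obtains fixed-kernel reflection as the direct contrapositive of the forward direction $\OrdNI \Rightarrow \RHNI$ of Theorem~\ref{thm:ni-equivalence}, and it also justifies that direction by observation collapse plus congruence of $\dirac$. You are right that only this direction is needed here, so injectivity of $\dirac$ plays no role in the corollary.
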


\begin{corollary}[Existential reflection]
\label{cor:ni-reflection-exists}
Assume $sep.\mathsf{isSafe}(br,locs)$ and
$P.\mathsf{WellFormed}(locs)$.  Then
\[
\begin{split}
  \bigl(\exists f.\;&
    \mathsf{HasProbFlipFrame}(sep,f)
    \land
    \neg\RHNI(\gamma,sep,f,br,locs,O,P)
  \bigr)
  \\
  &\Longrightarrow
  \neg\OrdNI(\gamma,locs,O,P).
\end{split}
\]
\end{corollary}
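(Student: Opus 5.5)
The plan is to obtain Corollary~\ref{cor:ni-reflection-exists} directly from Corollary~\ref{cor:ni-reflection-fixed}, or equivalently as the contrapositive of Corollary~\ref{cor:ni-robust-preservation}; no new semantic argument is needed. Assume $sep.\mathsf{isSafe}(br,locs)$ and $P.\mathsf{WellFormed}(locs)$. Destructure the existential hypothesis to obtain a witness kernel $f$ with $\mathsf{HasProbFlipFrame}(sep,f)$ and $\neg\RHNI(\gamma,sep,f,br,locs,O,P)$.

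For this $f$, the three hypotheses of Corollary~\ref{cor:ni-reflection-fixed} hold: frame-locality comes from the witness, and safety and well-formedness are the standing assumptions. That corollary then gives $\neg\OrdNI(\gamma,locs,O,P)$ directly. Alternatively, suppose $\OrdNI(\gamma,locs,O,P)$ held. Then Corollary~\ref{cor:ni-robust-preservation} would give $\RobustRHNI(\gamma,sep,br,locs,O,P)$. Instantiating its universal quantifier at $f$, and discharging the premise with $\mathsf{HasProbFlipFrame}(sep,f)$, yields $\RHNI(\gamma,sep,f,br,locs,O,P)$, which contradicts the witness.

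Both routes ultimately rest on Theorem~\ref{thm:ni-equivalence}, and through it on Theorem~\ref{thm:ni-observation-collapse}. For every frame-local kernel, $\ObsRH{n}$ is the Dirac distribution at $\ObsOrd{n}$, and Dirac is injective. Hence equality of the Rowhammer observation distributions on low-equivalent memories is equivalent to equality of the ordinary observations.

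There is no real obstacle. The only point needing care is that the kernel must be quantified so that it is in scope when the fixed-kernel reflection is applied, with frame-locality supplied by the witness rather than assumed globally. This is also why the statement needs only safety and well-formedness as standing hypotheses. In Lean I expect this to be a one-line term: \texttt{fun $\langle$f, hf, hni$\rangle$ => rowhammer\_interference\_reflection \ldots\ hf hsafe hwf hni}.
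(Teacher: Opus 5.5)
Your proposal is correct and matches the paper's argument: the paper treats existential reflection as a direct contrapositive of the forward direction of \Cref{thm:ni-equivalence}, which is the direction robust preservation uses, and destructuring the witness and then applying fixed-kernel reflection is exactly that. One small remark: only the forward implication $\OrdNI\Rightarrow\RHNI$ is used here, so injectivity of $\dirac$ plays no role in this corollary.
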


\NI In the Lean code, this is theorem
\texttt{exists\_rowhammer\_interference\_reflection}.

\subsection{Proof sketches}

\paragraph{Structural preservation.}
The existing lemma \texttt{stepDHat\_wf} proves that one absorbing
deterministic step preserves well-formedness over $locs$.  Its proof follows
the syntax of the residual program.  Expression reduction preserves
expression well-formedness. Sequencing preserves the well-formedness of the
untouched continuation. A conditional either reduces its guard or selects an
already well-formed branch. Unfolding a loop constructs a conditional and
sequence from already well-formed components.  This lemma is what allows the
step bound simulation invariant to be applied recursively to the
successor program.

\paragraph{One-step agreement.}
The generic relation $\Agree(keep,m_f,m_d)$ states that the faulty and
deterministic memories agree at every location satisfying $keep$.
Lemma \texttt{stepRHHat\_agree} assumes that every location used by
the program is kept and that every supported fault outcome preserves
all kept locations.  If the input memories agree on $keep$, then every
supported Rowhammer successor has exactly the same residual program as
the deterministic successor and its memory still agrees with the
deterministic memory on $keep$.  The proof is by structural induction
on the program, using corresponding agreement lemmas for arithmetic
and Boolean expressions.  In a read or write case, frame preservation
ensures that the sampled fault cannot alter a kept location.  Since
values read from kept locations agree, both semantics reduce the same
expression, write the same value, select the same branch, and unfold
the same control construct.

\paragraph{Finite step bound invariant.}
Lemma \texttt{runRHTrace\_protected} iterates one-step agreement by induction
on $n$.  For every outcome $r$ in the support of the $n$-step Rowhammer trace
run, it proves three facts:
\begin{itemize}

  \item The residual program of $r$ equals the deterministic residual program.
  \item The memories agree on every kept location.
  \item The complete accumulated label lists are equal.

\end{itemize}
In the successor case, \texttt{stepRHHat\_agree} supplies the next related
configuration, \texttt{stepDHat\_wf} supplies well-formedness of its residual
program, and the induction hypothesis supplies the remaining step bound.

\paragraph{Instantiation by physical separation.}
For protected semantic collapse, the proof instantiates
$keep(x)$ with $x\in locs$.  The premise that faults preserve kept locations
is derived by combining frame-locality with safe separation: for an access to
$x\in locs$, \texttt{safe\_disjoint} shows that every $\ell\in locs$ lies
outside the victim set, and \texttt{outside\_unchanged} therefore preserves
$\ell$ in every supported fault outcome.  The finite step bound invariant then
shows that every supported Rowhammer outcome has the same protected view as
the deterministic run.  The finite-distribution lemma
\texttt{mapD\_eq\_dirac\_of\_support} turns this support-wise fact into the
point mass equation of Theorem~\ref{thm:ni-protected-collapse}.

\paragraph{Observation collapse.}
Unfolding the two observation definitions exposes a map of $O$ over the
Rowhammer protected-view distribution.  Rewriting that distribution with
Theorem~\ref{thm:ni-protected-collapse} leaves $O$ mapped over a Dirac
distribution.  The finite-distribution law
\[
  \FDist.\mathsf{map}\;O\;(\dirac(v))
  =
  \dirac(O(v))
\]
gives Theorem~\ref{thm:ni-observation-collapse}.

\paragraph{Non-interference equivalence and its corollaries.}
For low-equivalent $m_1$ and $m_2$, observation collapse rewrites the two
Rowhammer observation distributions to
\[
  \dirac\bigl(\ObsOrd{n}_{locs,O}(P,m_1)\bigr)
  \quad\text{and}\quad
  \dirac\bigl(\ObsOrd{n}_{locs,O}(P,m_2)\bigr).
\]
Ordinary non-interference implies equality of the two points and hence
congruence of their Dirac distributions.  Conversely, equality of the two
Dirac distributions implies equality of their points by injectivity of
$\dirac$.  This proves Theorem~\ref{thm:ni-equivalence}.  Robust preservation
introduces an arbitrary frame-local kernel and applies the forward direction.
Fixed-kernel and existential reflection are direct contrapositives of that
same direction.

\subsection{No-fault sanity results}
The checked development also proves, without any safety or well-formedness
assumption, that the no-fault kernel embeds deterministic execution exactly:
\[
  \RunRHTrace{n}_{sep,\NoFlip(sep),br,locs}(c)
  =
  \dirac\bigl(\RunDTrace{n}(c)\bigr).
\]
The proof is an induction on $n$ using the existing one-step no-fault theorem,
Dirac bind, and mapping over a Dirac distribution.  It yields
\[
  \ObsRH{n}_{sep,\NoFlip(sep),br,locs,O}(P,m)
  =
  \dirac\bigl(\ObsOrd{n}_{locs,O}(P,m)\bigr)
\]
and therefore
\[
  \OrdNI(\gamma,locs,O,P)
  \quad\Longleftrightarrow\quad
  \RHNI(\gamma,sep,\NoFlip(sep),br,locs,O,P).
\]
In Lean they are \texttt{runRHTrace\_noFlip},
\texttt{noFault\_observation}, and
\texttt{ordinaryNI\_iff\_noFaultNI}.
 
\section{Conclusion}\label{section_conclusion}

This paper has given an abstract, compositional operational semantics
for a representative programming language in the presence of
Rowhammer.  The semantics treats Rowhammer as a probabilistic effect
on memory accesses, rather than as a direct model of DRAM organisation
or semiconductor physics.  The no-fault embedding theorem,
\Cref{thm:no-fault-finite-horizon}, recovers the ordinary
deterministic semantics as the point mass special case of the
Rowhammer semantics, answering \RQONE.  The model isolates the
Rowhammer effect as a small generalisation from state transformers to
probabilistic state transformers: memory reads and writes are
intercepted by a fault kernel, while the rest of the operational
semantics is lifted compositionally using the probability monad.  This
gives the small, conceptually non-intrusive interface sought in
\RQTWO.  The interface is also largely independent of the ambient
programming language.  The development uses only that programs perform
memory reads and writes, and that the ordinary operational semantics
can expose those accesses.  This is not by itself a formal semantics
for every language or machine model, but it gives a clear route for
instantiating the same Rowhammer interface in other languages,
intermediate representations, or instruction-set semantics.  In this
sense, the paper affirms \RQTHREE.  The semantic-collapse theorem,
\Cref{thm:goal-formal}, proves the soundness of physical separation:
if every protected program access has a victim set disjoint from the
protected locations, then every admissible Rowhammer kernel induces
exactly the deterministic protected behaviour.  This justifies
physical separation as a Rowhammer defence within the abstract fault
contract, answering \RQFOUR.  Finally, the information flow results
show that the abstraction supports standard semantic security
reasoning.  In particular, Theorems \ref{thm:ni-protected-collapse},
\ref{thm:ni-observation-collapse} and \ref{thm:ni-equivalence}
transport non-interference reasoning from ordinary executions to
Rowhammer-affected executions, and back again, under the same
physical separation hypotheses.  Thus the model does not merely
describe faulty executions, it makes Rowhammer amenable to established
semantic security frameworks, addressing \RQFIVE.
 \section{Future work}\label{section_future_work}

\paragraph{Refining the Rowhammer model}
The fault kernel contract history-freeness (F3) and stationarity (F4)
exclude genuinely history-dependent faults such as charge accumulation
across repeated activations and refresh-cycle effects.  Generalising
requires extending the Rowhammer state with an activation history,
refresh state, or other hardware state. Our theorems would then
quantify over arbitrary stateful kernels satisfying the same
protected-region preservation condition, and we expect the
parametric-invariant proof architecture to carry over.  Moreover, DRAM
vendors typically deploy proprietary, undocumented mitigations, most
notably Target Row Refresh (TRR), whose opaque nature complicates
formal security guarantees.  Bridging the gap between our formal
models and these black-box defences requires rigorous experimental
reverse engineering.  Pioneering methodologies such as the U-TRR
framework \cite{hassan2021utrr}, uncover the internal mechanics of TRR
by taking data-retention failures as a side channel.  Alongside this,
complementary characterisation of DRAM's disturbance sensitivities
\cite{orosa2021deeperlook} establishes clear empirical boundaries on
when such defences succeed or fail, and in turn informs principled
mitigations such as BlockHammer \cite{yaglikci2021blockhammer}.  By
experimentally exposing the true behaviour and limitations of in-DRAM
mitigations, such efforts provide the precise physical parameters and
fault models that a formal framework like ours needs in order to
deliver end-to-end software security guarantees.

\paragraph{Deploying physical separation.}
Our soundness theorem is conditional on the data placement satisfying
physical separation.  Establishing that on modern hardware is
nontrivial: DRAM vendors do not publish row geometry, in-DRAM address
remapping obscures physical adjacency, and the effective blast radius
must be measured per device.  Empirical work on learning the Rowhammer
parameters of a given device \cite{NaserediniA:alaactlorm} is
complementary to our results: it supplies the physical facts that
instantiate $br$ and $\safe$, while the theorem turns those facts into
a semantic guarantee.  The separation between the abstract geometry
(\texttt{SeparationModel}) and the semantics in our mechanisation is
designed exactly for this division of labour.

\paragraph{Probabilistic non-termination.}
A natural extension is to complement the finite step bound semantics
with an explicit account of eventual termination and divergence.  This
could be done either by taking limits of the finite step bound
termination probabilities, by developing a sub-probability semantics
over final memories, or by constructing a probability measure on
infinite execution paths.  A fully event-based treatment, in which
divergence is identified with a measurable set of infinite executions,
may additionally require the construction of an appropriate
probability measure on infinite paths and an adequacy theorem relating
that measure to the $n$-step operational semantics.  Each route
introduces nontrivial order-theoretic or measure-theoretic
mechanisation.

\paragraph{Connections with Shannon-Scott information.}
A further direction is to relate the present probabilistic account of
Rowhammer to semantic accounts of information based on both Shannon
entropy and Scott domains.  Shannon's theory of communication
provides a natural language for noisy channels and probabilistic
leakage~\cite{Shannon1948}, while domain-theoretic and topological
methods are commonly used in programming language semantics to model
partial information and non-terminating computation.  Recent work has
connected these two perspectives in the setting of information flow
and non-interference~\cite{Hunt2023ShannonScott}.  Our results compare
the distributions of low observations after a fixed number of
execution steps.  It would be interesting to investigate whether this
account can be extended to an infinite-horizon semantics in which
Rowhammer-induced probabilistic leakage, partial observations, and
non-termination are treated in a single Shannon-Scott framework.
 {\small

  \section*{Acknowledgements}
  \NI Chunyan Mu drew our attention to Rowhammer being an open problem
  from a non-interference angle.  We thank David Clark for numerous
  discussions about information flow and non-interference, Davide
  Bartolini and Xubin Tan for helping us to understand memory
  controllers, and Lev Mukhanov for alerting us to the difficulty of
  reproducing the Rowhammer attacks.  Fredrik Dahlqvist helped us
  better to understand Kozen-style probabilistic semantics of
  probabilistic programming languages.  We thank Yusuke Izawa and
  Mohammad M.~Ahmadpanah for helpful comments, which greatly improved
  the presentation of this paper.

}
 
\clearpage
\bibliographystyle{ACM-Reference-Format} \bibliography{references}

\newpage
\appendix
\vspace{1em}

{\huge \NI Supplementary Appendix: Mechanised high-level operational semantics of Rowhammer}
 \begin{bibunit} \section{Material for \NoCaseChange{\Cref{section_language}}:
  A deterministic small-step operational semantics for a WHILE
  language}\label{sec:small-step-ordinary}

This appendix presents the full small-step operational semantics that
were given only partially in \Cref{section_language}.  For more on
small-step operational semantics see any textbook on programming
language semantics like \EG
\cite{NipkowKlein2014,RiisNielsonH:semwitaaa,WinskelG:forsemoplai}.

Recall that the syntax (arithmetic expressions, Boolean expressions,
programs) are given by
\begin{GRAMMAR}
  e &::=& c \mid x \mid e+e
  \\[2mm]
  b &::=&
    \mathsf{true}
    \mid \mathsf{false}
    \mid \neg b
    \mid b\wedge b
    \mid e=e
  \\[2mm]
  P &::=&
    \mathsf{skip}
    \mid x:=e
    \mid P; P
    \mid \mathsf{if}\ b\ \mathsf{then}\ P\ \mathsf{else}\ P
    \mid \mathsf{while}\ b\ \mathsf{do}\ P
\end{GRAMMAR}
We identify arithmetic constants with their denoted values.  Thus
$v,v_1,v_2\in\Val$ range over arithmetic values, represented by constants
in the expression syntax.  Boolean values are
$t,t_1,t_2\in\{\mathsf{true},\mathsf{false}\}$.

Evaluation is left-to-right.  Conjunction is deliberately
non-short-circuiting, matching the expression semantics used elsewhere in
the paper.  Although deterministic expression reduction does not modify
memory, memories are included in expression configurations so that the
ordinary and faulty transition systems have the same shape.

Transitions carry a \emph{label} recording the memory access they perform:
\[
  \ell ::= \lread(x) \mid \lwrite(x) \mid \tau .
\]
A read of variable $x$ is labelled $\lread(x)$, the write performed by an
assignment to $x$ is labelled $\lwrite(x)$, and every other transition is
silent ($\tau$); congruence rules propagate the label of their premise.
The labels play no role in the deterministic semantics itself, but they
localise the fault points of the probabilistic Rowhammer semantics of
\Cref{section_probabilistic_model_of_rowhammer} and make access counting definitional.
Single steps are always written with their label; every multi-step notion
below is indexed by the \emph{sequence} of labels performed (written with
a double arrow), so the labelled judgement is the only transition
relation in play.

\subsubsection{Arithmetic expressions}

An arithmetic-expression configuration has the form $\langle e,m\rangle$.
The labelled relation
\[
  \langle e,m\rangle \AStepL{\ell} \langle e',m'\rangle
\]
is the least relation generated by the following rules.  Arithmetic values
have no outgoing transition.

\[
\frac{\strut}{
  \langle x,m\rangle
  \AStepL{\lread(x)}
  \langle m(x),m\rangle
}
\tag{\textsc{A-Read}}
\]

\[
\frac{
  \langle e_1,m\rangle
  \AStepL{\ell}
  \langle e_1',m'\rangle
}{
  \langle e_1+e_2,m\rangle
  \AStepL{\ell}
  \langle e_1'+e_2,m'\rangle
}
\tag{\textsc{A-Plus-L}}
\]

\[
\frac{
  \langle e_2,m\rangle
  \AStepL{\ell}
  \langle e_2',m'\rangle
}{
  \langle v_1+e_2,m\rangle
  \AStepL{\ell}
  \langle v_1+e_2',m'\rangle
}
\tag{\textsc{A-Plus-R}}
\]

\[
\frac{
  v=v_1+v_2
}{
  \langle v_1+v_2,m\rangle
  \AStepL{\tau}
  \langle v,m\rangle
}
\tag{\textsc{A-Plus}}
\]

\subsubsection{Boolean expressions}

A Boolean-expression configuration has the form $\langle b,m\rangle$.
The labelled relation
\[
  \langle b,m\rangle \BStepL{\ell} \langle b',m'\rangle
\]
is generated by the following rules.  The Boolean values
$\mathsf{true}$ and $\mathsf{false}$ have no outgoing transition.

\[
\frac{
  \langle b,m\rangle
  \BStepL{\ell}
  \langle b',m'\rangle
}{
  \langle \neg b,m\rangle
  \BStepL{\ell}
  \langle \neg b',m'\rangle
}
\tag{\textsc{B-Not-Step}}
\]

\[
\frac{\strut}{
  \langle \neg\mathsf{true},m\rangle
  \BStepL{\tau}
  \langle \mathsf{false},m\rangle
}
\tag{\textsc{B-Not-True}}
\]

\[
\frac{\strut}{
  \langle \neg\mathsf{false},m\rangle
  \BStepL{\tau}
  \langle \mathsf{true},m\rangle
}
\tag{\textsc{B-Not-False}}
\]

\[
\frac{
  \langle b_1,m\rangle
  \BStepL{\ell}
  \langle b_1',m'\rangle
}{
  \langle b_1\wedge b_2,m\rangle
  \BStepL{\ell}
  \langle b_1'\wedge b_2,m'\rangle
}
\tag{\textsc{B-And-L}}
\]

\[
\frac{
  \langle b_2,m\rangle
  \BStepL{\ell}
  \langle b_2',m'\rangle
}{
  \langle t_1\wedge b_2,m\rangle
  \BStepL{\ell}
  \langle t_1\wedge b_2',m'\rangle
}
\tag{\textsc{B-And-R}}
\]

\[
\frac{
  t=t_1\wedge t_2
}{
  \langle t_1\wedge t_2,m\rangle
  \BStepL{\tau}
  \langle t,m\rangle
}
\tag{\textsc{B-And}}
\]

\[
\frac{
  \langle e_1,m\rangle
  \AStepL{\ell}
  \langle e_1',m'\rangle
}{
  \langle e_1=e_2,m\rangle
  \BStepL{\ell}
  \langle e_1'=e_2,m'\rangle
}
\tag{\textsc{B-Eq-L}}
\]

\[
\frac{
  \langle e_2,m\rangle
  \AStepL{\ell}
  \langle e_2',m'\rangle
}{
  \langle v_1=e_2,m\rangle
  \BStepL{\ell}
  \langle v_1=e_2',m'\rangle
}
\tag{\textsc{B-Eq-R}}
\]

\[
\frac{
  t=(v_1=v_2)
}{
  \langle v_1=v_2,m\rangle
  \BStepL{\tau}
  \langle t,m\rangle
}
\tag{\textsc{B-Eq}}
\]

\subsubsection{Programs}

A program configuration has the form $\langle P,m\rangle$.  The labelled
program transition relation
\[
  \langle P,m\rangle \PStepL{\ell} \langle P',m'\rangle
\]
is the least relation generated by the following rules.

An assignment first reduces its right-hand side one expression step at a
time.  Once the expression is a value, the assignment performs the write.

\[
\frac{
  \langle e,m\rangle
  \AStepL{\ell}
  \langle e',m'\rangle
}{
  \langle x:=e,m\rangle
  \PStepL{\ell}
  \langle x:=e',m'\rangle
}
\tag{\textsc{P-Assign-Step}}
\]

\[
\frac{\strut}{
  \langle x:=v,m\rangle
  \PStepL{\lwrite(x)}
  \langle \mathsf{skip},m[x\mapsto v]\rangle
}
\tag{\textsc{P-Assign}}
\]

Sequential composition reduces its left-hand program until that program
has terminated.

\[
\frac{
  \langle P,m\rangle
  \PStepL{\ell}
  \langle P',m'\rangle
}{
  \langle P;Q,m\rangle
  \PStepL{\ell}
  \langle P';Q,m'\rangle
}
\tag{\textsc{P-Seq-Step}}
\]

\[
\frac{\strut}{
  \langle \mathsf{skip};Q,m\rangle
  \PStepL{\tau}
  \langle Q,m\rangle
}
\tag{\textsc{P-Seq-Skip}}
\]

A conditional reduces its guard one Boolean-expression step at a time and
then selects the corresponding branch.

\[
\frac{
  \langle b,m\rangle
  \BStepL{\ell}
  \langle b',m'\rangle
}{
  \left\langle
    \mathsf{if}\ b\ \mathsf{then}\ P\ \mathsf{else}\ Q,
    m
  \right\rangle
  \PStepL{\ell}
  \left\langle
    \mathsf{if}\ b'\ \mathsf{then}\ P\ \mathsf{else}\ Q,
    m'
  \right\rangle
}
\tag{\textsc{P-If-Step}}
\]

\[
\frac{\strut}{
  \left\langle
    \mathsf{if}\ \mathsf{true}\ \mathsf{then}\ P\ \mathsf{else}\ Q,
    m
  \right\rangle
  \PStepL{\tau}
  \langle P,m\rangle
}
\tag{\textsc{P-If-True}}
\]

\[
\frac{\strut}{
  \left\langle
    \mathsf{if}\ \mathsf{false}\ \mathsf{then}\ P\ \mathsf{else}\ Q,
    m
  \right\rangle
  \PStepL{\tau}
  \langle Q,m\rangle
}
\tag{\textsc{P-If-False}}
\]

A while loop unfolds by one standard small step.  Its guard is subsequently
reduced by the conditional rules above.

\[
\frac{\strut}{
  \left\langle
    \mathsf{while}\ b\ \mathsf{do}\ P,
    m
  \right\rangle
  \PStepL{\tau}
  \left\langle
    \mathsf{if}\ b\ \mathsf{then}\
      \bigl(P;\mathsf{while}\ b\ \mathsf{do}\ P\bigr)\
    \mathsf{else}\ \mathsf{skip},
    m
  \right\rangle
}
\tag{\textsc{P-While}}
\]

There is no transition from $\langle\mathsf{skip},m\rangle$; these are the
terminal program configurations.

\subsubsection{Finite and terminating executions}

A finite execution is indexed by the sequence of labels it performs.
Write
\[
  \langle P,m\rangle
  \PStepT{s}
  \langle P',m'\rangle
\]
for $s=\ell_1\cdots\ell_k$ when there are configurations
\[
  \langle P,m\rangle
  = C_0 \PStepL{\ell_1} C_1 \PStepL{\ell_2} \cdots
    \PStepL{\ell_k} C_k
  = \langle P',m'\rangle.
\]
The length $|s|$ is the number of transitions performed. Because
expression reductions are propagated through assignments and conditionals
one step at a time, it counts both expression reductions and command
reductions.  The \emph{access trace} $\mathsf{acc}(s)$ is the subsequence
of non-$\tau$ labels of $s$, in order; its length is the number of memory
accesses (reads and writes) the run performs.  Both are definitions over
the label sequence, not counts of rule occurrences inside derivation
trees.  The analogous trace-indexed relations $\AStepT{s}$ and
$\BStepT{s}$ at the expression levels are defined in the same way.

A program terminates with final memory $m'$, written
\[
  \langle P,m\rangle\Downarrow m',
\]
when $\langle P,m\rangle\PStepT{s}\langle\mathsf{skip},m'\rangle$ for some
label sequence $s$.  It terminates within at most $n$ small steps, written
\[
  \langle P,m\rangle\Downarrow_{\leq n}m',
\]
when moreover some such $s$ has $|s|\leq n$.  A program diverges from $m$,
written
\[
  \langle P,m\rangle\Uparrow,
\]
when for every $n\in\Nat$ there are a label sequence $s$ with $|s|=n$ and
a configuration $\langle P_n,m_n\rangle$ such that
$\langle P,m\rangle\PStepT{s}\langle P_n,m_n\rangle$.
The index $n$ here is a
literal bound on the number of small-step transitions.  Sequential
composition cannot duplicate or reuse a budget already consumed by its
left-hand program.

\subsubsection{Basic metatheory}

The lemmas of this subsection are standard and proved on paper. The
mechanisation contains those the soundness theorem rests on
(determinism and progress in functional form, trace realisation,
divergence, and well-formedness preservation).
The remaining lemmas below are stated for completeness of the
presentation and are not part of the Lean development.

\begin{lemma}[Values are terminal]
\label{lem:ss-values-terminal}
There are no $\ell$ and $C$ with $\langle v,m\rangle\AStepL{\ell}C$; none
with $\langle t,m\rangle\BStepL{\ell}C$; and none with
$\langle\mathsf{skip},m\rangle\PStepL{\ell}C$.
\end{lemma}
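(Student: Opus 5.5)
The plan is to prove the three claims of \Cref{lem:ss-values-terminal} separately, each by inversion on the last rule of a putative derivation. The labelled relations $\AStepL{\ell}$, $\BStepL{\ell}$ and $\PStepL{\ell}$ are each the least relation closed under their rules. So a transition from a given configuration must be the conclusion of some rule whose conclusion's left-hand side matches that configuration syntactically. It therefore suffices to check, for each rule, that its source term cannot be a value, a Boolean constant, or $\mathsf{skip}$.

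For arithmetic expressions, suppose $\langle v,m\rangle\AStepL{\ell}C$ with $v$ a constant. The source terms of the four arithmetic rules are as follows.
\begin{itemize}
\item \textsc{A-Read} has source $x$, a variable.
\item \textsc{A-Plus-L} has source $e_1+e_2$.
\item \textsc{A-Plus-R} has source $v_1+e_2$.
\item \textsc{A-Plus} has source $v_1+v_2$.
\end{itemize}
None of these is a constant $c$, because distinct syntactic constructors are distinct. So no rule applies and there is no such transition. No induction is needed here, since the conclusion's shape alone rules out every rule.

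For Boolean expressions the argument is the same. The sources of the nine rules have the forms $\neg b$, $\neg\mathsf{true}$, $\neg\mathsf{false}$, $b_1\wedge b_2$, $t_1\wedge b_2$, $t_1\wedge t_2$, $e_1=e_2$, $v_1=e_2$ and $v_1=v_2$. None of these is $\mathsf{true}$ or $\mathsf{false}$.

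For programs, the sources are $x:=e$, $x:=v$, $P;Q$, $\mathsf{skip};Q$, the three conditional forms, and $\mathsf{while}\ b\ \mathsf{do}\ P$. None is the bare constructor $\mathsf{skip}$. The only place $\mathsf{skip}$ occurs in a source is \textsc{P-Seq-Skip}, and there it appears strictly inside a sequential composition $\mathsf{skip};Q$, which is a different term.

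In Lean each part is a single \texttt{cases} on the inductive step relation, with all cases closed by constructor disjointness, using \texttt{nomatch} or \texttt{cases h}. There is no real obstacle in this proof. The one point to be careful about is that values are identified with constants in the expression syntax, so $v$ must really be the constructor $c$ and not a term that merely evaluates to a value. I would make this explicit by stating the lemma over the constructor, for example \texttt{AExp.const v}, rather than over a semantic predicate for being a value.
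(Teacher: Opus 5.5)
Your proof is correct: inversion on the last rule of a putative derivation, checking that no rule's source term is an arithmetic constant, a Boolean constant, or the bare $\mathsf{skip}$ (and rightly distinguishing $\mathsf{skip};Q$ in \textsc{P-Seq-Skip} from $\mathsf{skip}$), is the standard argument. The paper gives no written proof of this lemma and treats it as one of the standard facts stated for completeness, so your case analysis is essentially the argument it leaves implicit.
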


\begin{lemma}[Determinism]
\label{lem:ss-determinism}
Each of the labelled relations is a partial function on configurations,
including its label: if $C\PStepL{\ell_1} C_1$ and $C\PStepL{\ell_2} C_2$
then $\ell_1=\ell_2$ and $C_1=C_2$, and likewise for $\AStepL{\cdot}$ and
$\BStepL{\cdot}$.
\end{lemma}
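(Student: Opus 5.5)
The proof is by simultaneous structural induction on the derivations of the two transitions. We treat the three judgements $\AStepL{\cdot}$, $\BStepL{\cdot}$, $\PStepL{\cdot}$ in that order, since the Boolean rules invoke arithmetic determinism and the program rules invoke both. For arithmetic expressions we fix $\langle e,m\rangle\AStepL{\ell_1}C_1$ and $\langle e,m\rangle\AStepL{\ell_2}C_2$ and induct on the first derivation, inverting the second. The rules are syntax-directed once we also account for whether immediate subterms are values. For example, with $e=e_1+e_2$ exactly one of three situations applies: $e_1$ is not a value (\textsc{A-Plus-L}), $e_1$ is a value but $e_2$ is not (\textsc{A-Plus-R}), or both are values (\textsc{A-Plus}). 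Hence both derivations must end with the same rule.

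Mixed pairs of rules, such as \textsc{A-Plus-L} against \textsc{A-Plus-R}, are excluded by \Cref{lem:ss-values-terminal}. \textsc{A-Plus-R} requires the left operand to be a value $v_1$, and that value has no transition, which contradicts the premise of \textsc{A-Plus-L}. Likewise \textsc{A-Plus} is incompatible with either congruence rule. When both derivations end with the same congruence rule, the induction hypothesis on the premise gives equal labels and equal successor configurations. The conclusion then follows because the label is propagated unchanged and the successor is rebuilt by the same context. The axioms \textsc{A-Read} and \textsc{A-Plus} are functional, since $m(x)$ and $v_1+v_2$ are uniquely determined and the label is fixed. The Boolean judgement is handled in the same way. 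The value-versus-non-value trichotomy for $\neg$, $\wedge$ and $=$ separates \textsc{B-Not-Step} from \textsc{B-Not-True} and \textsc{B-Not-False}, and separates \textsc{B-And-L}, \textsc{B-And-R} and \textsc{B-And} from one another, and similarly for equality. For \textsc{B-Eq-L} and \textsc{B-Eq-R} we use the arithmetic case already proved rather than the Boolean induction hypothesis.

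For programs we argue in the same way, with inversion on the head constructor. For assignment, \textsc{P-Assign-Step} applies exactly when $e$ is not a value, and arithmetic determinism closes that case, while \textsc{P-Assign} yields the unique label $\lwrite(x)$ and the unique memory $m[x\mapsto v]$. For sequencing, \textsc{P-Seq-Skip} applies exactly when the left component is $\mathsf{skip}$, which cannot step by \Cref{lem:ss-values-terminal}, and otherwise the induction hypothesis closes the case. Conditionals split on whether the guard is $\mathsf{true}$, $\mathsf{false}$, or a non-value, and Boolean determinism closes the congruence case. The while loop has a single axiom. The case $\mathsf{skip}$ has no rule at all.

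We expect no single conceptual obstacle. The main bookkeeping burden is the exhaustive exclusion of overlapping rule pairs, each of which needs an appeal to \Cref{lem:ss-values-terminal} at the right subterm. In the mechanisation we would avoid this case explosion by defining the step as a total function to an option type, namely $\mathsf{step}:C\to\mathsf{Option}(\ACCESS\times C)$ by recursion on syntax. We would then prove that the relation holds exactly when $\mathsf{step}$ returns that value, from which determinism follows immediately because the function is single-valued. This is the functional form of determinism mentioned in the basic metatheory.
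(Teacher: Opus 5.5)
Your proposal is correct and takes the same approach as the paper's proof. Both argue by rule induction. \Cref{lem:ss-values-terminal} rules out overlaps between congruence and redex rules, the induction hypothesis handles matching congruence rules, and the axioms are functional. Your closing remark about an option-valued step function matches the ``functional form'' of determinism that the paper says is mechanised.
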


\begin{proof}
Rule induction.  For each syntactic form at most one rule applies: the
congruence rules require a non-value in the position being reduced (values
are terminal by \Cref{lem:ss-values-terminal}, so their premises are
otherwise unsatisfiable), and the redex rules require values in all
positions.
\end{proof}

\begin{lemma}[Progress]
\label{lem:ss-progress}
If $e$ is not a value then $\langle e,m\rangle\AStepL{\ell}\langle
e',m'\rangle$ for some $\ell$, $e'$, $m'$; if $b$ is not a value then
$\langle b,m\rangle\BStepL{\ell}\langle b',m'\rangle$ for some $\ell$,
$b'$, $m'$; and if $P\neq\mathsf{skip}$ then
$\langle P,m\rangle\PStepL{\ell}\langle P',m'\rangle$ for some $\ell$,
$P'$, $m'$.
\end{lemma}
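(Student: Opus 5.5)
We prove the three clauses of \Cref{lem:ss-progress} in order, arithmetic, then Boolean, then programs, each by structural induction on the syntax. Each clause uses the previous ones for the subterms that the congruence rules reduce. The proof follows the usual shape: at each syntactic form, either some immediate subterm in evaluation position is not yet a value, and we use a congruence rule fed by the induction hypothesis, or all relevant positions are values, and a redex rule applies unconditionally.

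\emph{Arithmetic expressions.} Suppose $e$ is not a value.
\begin{itemize}
\item If $e=x$, then \textsc{A-Read} gives $\langle x,m\rangle\AStepL{\lread(x)}\langle m(x),m\rangle$.
\item If $e=e_1+e_2$ and $e_1$ is not a value, the induction hypothesis gives a step of $e_1$, and \textsc{A-Plus-L} lifts it to $e$.
\item If $e=v_1+e_2$ with $e_2$ not a value, we proceed in the same way with \textsc{A-Plus-R}.
\item If $e=v_1+v_2$, then \textsc{A-Plus} fires with $v=v_1+v_2$.
\end{itemize}
The constant case is excluded by hypothesis.

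\emph{Boolean expressions.} Suppose $b$ is not $\mathsf{true}$ or $\mathsf{false}$.
\begin{itemize}
\item For $\neg b'$: if $b'$ is not a value, use the induction hypothesis and \textsc{B-Not-Step}. Otherwise $b'$ is $\mathsf{true}$ or $\mathsf{false}$, and \textsc{B-Not-True} or \textsc{B-Not-False} applies.
\item For $b_1\wedge b_2$ we use the same left-then-right case split with \textsc{B-And-L}, \textsc{B-And-R} and \textsc{B-And}.
\item For $e_1=e_2$ we use the arithmetic clause already proved, together with \textsc{B-Eq-L}, \textsc{B-Eq-R} and \textsc{B-Eq}.
\end{itemize}

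\emph{Programs.} Suppose $P\neq\mathsf{skip}$.
\begin{itemize}
\item For $x:=e$: if $e$ is a value use \textsc{P-Assign}; otherwise use the arithmetic clause and \textsc{P-Assign-Step}.
\item For $P_1;P_2$: if $P_1=\mathsf{skip}$ use \textsc{P-Seq-Skip}; otherwise the induction hypothesis on $P_1$ and \textsc{P-Seq-Step}.
\item For a conditional: if the guard is $\mathsf{true}$ or $\mathsf{false}$, use \textsc{P-If-True} or \textsc{P-If-False}; otherwise the Boolean clause and \textsc{P-If-Step}.
\item A loop always steps by \textsc{P-While}.
\end{itemize}
Only the sequencing case needs the induction hypothesis for programs. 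Since $P_1$ is a proper subterm, the induction is well founded even though \textsc{P-While} builds a larger term; we never recurse into the result of a step.

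The only mild obstacle is bookkeeping. Each case split must be exhaustive, meaning every subterm is either a value or not, and for Booleans a value is exactly $\mathsf{true}$ or $\mathsf{false}$. This decidability is immediate from the syntax. In the Lean development we would state progress in functional form, as the existence of the total step function $\mathsf{stepD}$ returning some successor on non-terminal configurations. The statement here then follows by unfolding that definition case by case, and no genuinely hard step arises.
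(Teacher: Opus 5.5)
Your proposal is correct and takes the same approach as the paper, whose proof is a one-line structural induction on the syntax. You spell out the case analysis that the paper leaves implicit. The one ingredient the paper names explicitly is that \textsc{A-Read} always fires because memories are total functions $\Loc\to\INT$; you use this implicitly when you write $m(x)$.
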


\begin{proof}
Induction on the syntax, using totality of memories for \textsc{A-Read}.
\end{proof}

\begin{corollary}[Termination--divergence dichotomy]
\label{cor:ss-dichotomy}
For every configuration $\langle P,m\rangle$, exactly one of the following
holds: $\langle P,m\rangle\Downarrow m'$ for a unique $m'$, or
$\langle P,m\rangle\Uparrow$.
\end{corollary}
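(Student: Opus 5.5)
The plan is to derive the dichotomy from determinism (\Cref{lem:ss-determinism}) and progress (\Cref{lem:ss-progress}), together with terminality of $\mathsf{skip}$ (\Cref{lem:ss-values-terminal}). First, from determinism and progress I extract a step function: every non-terminal configuration $C=\langle P,m\rangle$ with $P\neq\mathsf{skip}$ has exactly one labelled successor. Iterating it gives, for each $k$, at most one configuration $C_k$ reachable from $\langle P,m\rangle$ by a label sequence of length $k$. I would prove this by induction on $k$, using determinism at each step; the label sequence is then unique as well. Consequently the set of reachable configurations forms a single linear chain $C_0\PStepL{\ell_1}C_1\PStepL{\ell_2}\cdots$. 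This chain is either finite, ending in a configuration with no successor, or infinite.

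Now I do a case split on whether some $C_k$ has program $\mathsf{skip}$.

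\emph{Case 1: some $C_k=\langle\mathsf{skip},m'\rangle$.} Then $\langle P,m\rangle\Downarrow m'$. For uniqueness, suppose also $\langle P,m\rangle\PStepT{s'}\langle\mathsf{skip},m''\rangle$. By uniqueness of reachable configurations at each length, the two runs agree up to the shorter length. Since $\mathsf{skip}$ has no outgoing transition, neither run can continue past its $\mathsf{skip}$, so the lengths coincide and $m''=m'$. Divergence fails here: divergence demands a run of length $k+1$, and its $k$-th configuration would have to be $C_k$, which is terminal.

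\emph{Case 2: no reachable configuration is terminal.} By progress, every $C_k$ has a successor. An induction on $n$ then builds a run of every length $n$, so $\langle P,m\rangle\Uparrow$. Termination fails because no reachable configuration has program $\mathsf{skip}$.

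Together the two cases give existence and exclusivity of the alternatives, and Case 1 gives uniqueness of $m'$.

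The main obstacle is the ``exactly one'' part, namely that termination and divergence cannot coexist. This rests on showing that runs of a given length are unique, so that prefixes of a divergent run coincide with the terminating run. I would isolate this as a small trace-uniqueness lemma before doing the case split. The case split is classical: it uses excluded middle on the existence of $k$, which is harmless here since the development already relies on \texttt{Classical.choice}.
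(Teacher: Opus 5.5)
Your proposal is correct and follows essentially the paper's argument. Determinism makes the execution from $\langle P,m\rangle$ a single chain; your length-indexed uniqueness lemma is exactly the paper's ``determinism iterated along the run''. Progress means the chain can stop only at $\mathsf{skip}$, and terminality of $\mathsf{skip}$ yields both uniqueness of $m'$ and mutual exclusion of termination and divergence. Your explicit case split on whether some reachable configuration is terminal just spells out the paper's informal finite/infinite split of the ``maximal transition sequence''.
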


\begin{proof}
By \Cref{lem:ss-determinism} the maximal transition sequence from
$\langle P,m\rangle$ is unique, and by \Cref{lem:ss-progress} it can stop
only at a terminal configuration $\langle\mathsf{skip},m'\rangle$.  If that
sequence is finite it yields $\Downarrow m'$, with $m'$ unique by
determinism iterated along the run; if it is infinite, every $n$ is
realised and $\Uparrow$ holds.  The two cases exclude each other because a
terminated run has no continuation.
\end{proof}

\begin{lemma}[Memory invariance of deterministic expression steps]
\label{lem:ss-mem-invariance}
If $\langle e,m\rangle\AStepL{\ell}\langle e',m'\rangle$ or
$\langle b,m\rangle\BStepL{\ell}\langle b',m'\rangle$, then $m'=m$.
\end{lemma}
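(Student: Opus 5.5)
The statement is Lemma~\ref{lem:ss-mem-invariance}. I would prove it by rule induction on the derivation of the expression transition, treating arithmetic first and then Boolean expressions, since the Boolean rules \textsc{B-Eq-L} and \textsc{B-Eq-R} have arithmetic premises. Formally I would first prove the arithmetic claim
\[
  \langle e,m\rangle\AStepL{\ell}\langle e',m'\rangle \;\Longrightarrow\; m'=m
\]
by induction on the derivation of $\AStepL{\ell}$. I would then prove the Boolean claim by induction on the derivation of $\BStepL{\ell}$, invoking the arithmetic claim as an already established fact in the equality cases. Mutual induction is not needed, because the dependency goes only one way.

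\textbf{Arithmetic cases.} The axioms \textsc{A-Read} and \textsc{A-Plus} have the same memory $m$ on both sides of the conclusion, so $m'=m$ holds syntactically. The congruence rules \textsc{A-Plus-L} and \textsc{A-Plus-R} have conclusion memory $m'$ equal to the premise's post-memory. The induction hypothesis applied to the premise gives $m'=m$.

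\textbf{Boolean cases.} \textsc{B-Not-True}, \textsc{B-Not-False}, \textsc{B-And} and \textsc{B-Eq} are axioms whose conclusions leave $m$ unchanged. \textsc{B-Not-Step}, \textsc{B-And-L} and \textsc{B-And-R} follow from the Boolean induction hypothesis. \textsc{B-Eq-L} and \textsc{B-Eq-R} follow from the arithmetic result applied to their premise.

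There is no real obstacle here; the argument is a routine check that no expression rule writes memory. The only point needing care is the ordering just described: the arithmetic lemma must be available before the Boolean induction reaches the equality rules. A corollary worth noting is that the multi-step versions $\AStepT{s}$ and $\BStepT{s}$ also preserve memory, by a trivial induction on $|s|$.
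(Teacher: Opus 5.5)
Your proof is correct. Arithmetic rule induction, followed by Boolean rule induction that appeals to the arithmetic result in the \textsc{B-Eq-L} and \textsc{B-Eq-R} cases, covers every rule, and you are right that the dependency runs only one way, so no mutual induction is needed. The paper states this lemma without proof among its standard, non-mechanised lemmas; your argument is the routine rule induction it presumably intends, the same technique it uses for determinism.
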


Deterministically, expression-level small steps are thus redundant: only
\textsc{P-Assign} modifies the memory.  This lemma is the precise sense in
which the fine-grained presentation is overkill for the ordinary semantics;
the Rowhammer refinement of \Cref{sec:small-step-faulty} breaks exactly
this lemma, a read may disturb the blast radius of the accessed
location, and nothing else in the design.

\begin{lemma}[Expression evaluation terminates, with explicit cost]
\label{lem:ss-expr-cost}
Let $r(e)$ be the number of variable occurrences plus the number of
operators in $e$.  Then $\langle e,m\rangle\AStepT{s}\langle v,m\rangle$
for a unique value $v$ and some $s$ with $|s|=r(e)$, and similarly for
Boolean expressions.  Each transition fires exactly one \textsc{A-Read}
or one redex rule under congruences, so $r$ strictly decreases;
$\mathsf{acc}(s)$ is the sequence of reads of $e$, in left-to-right
order.
\end{lemma}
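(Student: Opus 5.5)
The proof goes by structural induction on the expression, proved separately for arithmetic and then Boolean expressions. The statement to carry through the induction is the conjunction of three facts: reduction reaches a value with the memory unchanged, the number of steps is exactly $r(e)$, and the access trace lists the reads of $e$ from left to right. Memory stays fixed throughout by \Cref{lem:ss-mem-invariance} applied to each step, so the final configuration is $\langle v,m\rangle$.

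The base cases are direct.
\begin{itemize}
\item A constant $c$ has $r(c)=0$; the empty sequence works and gives $v=c$.
\item A variable $x$ has $r(x)=1$; the single step \textsc{A-Read} works, with $\mathsf{acc}=\lread(x)$.
\end{itemize}

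For $e_1+e_2$ we have $r(e_1+e_2)=r(e_1)+r(e_2)+1$. The run is built in three phases.
\begin{itemize}
\item By the induction hypothesis, $\langle e_1,m\rangle\AStepT{s_1}\langle v_1,m\rangle$ with $|s_1|=r(e_1)$. Lift each step through \textsc{A-Plus-L}; the labels are preserved because congruence rules propagate them.
\item Likewise lift the run $s_2$ for $e_2$ through \textsc{A-Plus-R}, now that the left operand is the value $v_1$.
\item Finish with one $\tau$ step by \textsc{A-Plus}.
\end{itemize}
The concatenated sequence $s_1 s_2 \tau$ has length $r(e_1)+r(e_2)+1$. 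Its access trace is $\mathsf{acc}(s_1)\,\mathsf{acc}(s_2)$, which is the left-to-right reads of $e_1+e_2$.

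The small auxiliary lemma needed here is that trace-indexed runs lift through single-hole congruence contexts, preserving length and labels. It is proved by induction on the length of the run.

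The Boolean cases follow the same pattern.
\begin{itemize}
\item For $\neg b$, lift the run of $b$ through \textsc{B-Not-Step}, then take one step by \textsc{B-Not-True} or \textsc{B-Not-False}.
\item For $b_1\wedge b_2$, lift through \textsc{B-And-L} and \textsc{B-And-R}, then apply \textsc{B-And}. Non-short-circuiting matters here: both operands are always fully evaluated, so the count $r(b_1)+r(b_2)+1$ is exact.
\item For $e_1=e_2$, reuse the arithmetic result and lift through \textsc{B-Eq-L} and \textsc{B-Eq-R}, then apply \textsc{B-Eq}.
\item $\mathsf{true}$ and $\mathsf{false}$ have cost $0$.
\end{itemize}

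Uniqueness of $v$ follows from \Cref{lem:ss-determinism} iterated along the run, together with \Cref{lem:ss-values-terminal}. Two runs from $\langle e,m\rangle$ to values must coincide step by step, because a value has no outgoing step, so neither run can stop early or continue past the other.

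The alternative reading, that $r$ strictly decreases along any step, can be checked by rule induction on a single step. Every step fires exactly one redex rule under some congruences. An \textsc{A-Read} step turns a variable into a constant, lowering $r$ by $1$. An operator redex removes one operator, also lowering $r$ by $1$. This gives termination independently of the construction above.

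The main obstacle is the bookkeeping in the congruence-lifting lemma: stating it generically enough to serve all six congruence rules without duplicating the proof. I would first try proving a single lemma parameterised by the context-building function and the rule that lifts one step.
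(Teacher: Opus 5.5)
Your proof is correct, but it is organised differently from the paper's. The paper gives no separate proof; its justification is the potential-function argument embedded in the statement. Every transition fires exactly one \textsc{A-Read} or one redex rule under congruences, so $r$ drops by exactly one per step. Since $r(e)=0$ precisely when $e$ is a value, \Cref{lem:ss-progress} and \Cref{lem:ss-determinism} then force the unique maximal run to reach a value after exactly $r(e)$ steps.

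You make the construction primary instead: structural induction builds the run $s_1 s_2\tau$ from lifted sub-runs, and the decrease of $r$ appears only as a side remark giving termination. The measure route is shorter. It needs one rule induction on single steps and no congruence-lifting lemma, in particular none of the cross-sort lifting from arithmetic to Boolean runs that you need for \textsc{B-Eq-L} and \textsc{B-Eq-R}. It also shows directly that any run from $e$ to a value has length exactly $r(e)$. However, it does not by itself give the access-trace clause. That would need a second one-step invariant, $\mathrm{reads}(e)=\mathsf{acc}(\ell)\cdot\mathrm{reads}(e')$, which relies on left-to-right evaluation always consuming the leftmost unevaluated variable.

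Your compositional induction delivers the length, the unchanged memory, and the left-to-right access trace in one pass. The price is the generic lifting lemma, which you correctly identify as the main bookkeeping burden.
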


\begin{lemma}[Adequacy of explicit reads]
\label{lem:ss-adequacy}
$\langle e,m\rangle\AStepT{s}\langle v,m\rangle$ for some $s$ iff
$\EvalOrd{e}(m)=v$, and $\langle b,m\rangle\BStepT{s}\langle t,m\rangle$
for some $s$ iff $\EvalOrd{b}(m)=t$, where $\EvalOrd{\cdot}$ is the
atomic expression evaluation defined above.
\end{lemma}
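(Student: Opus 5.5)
The plan is to prove the two directions of \Cref{lem:ss-adequacy} separately, by structural induction on $e$ (and then on $b$). The two main tools are the memory-invariance lemma \Cref{lem:ss-mem-invariance} and the cost lemma \Cref{lem:ss-expr-cost}. The atomic evaluation $\EvalOrd{\cdot}$ is given by the standard clauses:
\[
  \EvalOrd{c}(m)=c, \qquad \EvalOrd{x}(m)=m(x), \qquad \EvalOrd{e_1+e_2}(m)=\EvalOrd{e_1}(m)+\EvalOrd{e_2}(m),
\]
with the analogous clauses for $\neg$, $\wedge$ and $=$.

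For the direction ``if $\EvalOrd{e}(m)=v$ then $\langle e,m\rangle\AStepT{s}\langle v,m\rangle$'', I will prove the stronger statement that $\langle e,m\rangle\AStepT{s}\langle\EvalOrd{e}(m),m\rangle$ for some $s$. The argument goes by induction on $e$.
\begin{itemize}
\item \emph{Constants:} take $s$ empty.
\item \emph{Variables:} take the single step \textsc{A-Read}.
\item \emph{Sums $e_1+e_2$:} first obtain the trace for $e_1$ from the induction hypothesis. Lift it step by step through \textsc{A-Plus-L}; this needs a small congruence lemma for multi-step traces, proved by induction on the length of the trace. This yields $\langle v_1+e_2,m\rangle$. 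Next lift the trace for $e_2$ through \textsc{A-Plus-R}, where $v_1$ is now a value, giving $\langle v_1+v_2,m\rangle$. Finish with one \textsc{A-Plus} step.
\end{itemize}
Memory stays equal to $m$ throughout by \Cref{lem:ss-mem-invariance}, which is what lets the second induction hypothesis be applied at the same $m$. The Boolean case is identical, using \textsc{B-Not-Step}, \textsc{B-And-L}, \textsc{B-And-R}, \textsc{B-Eq-L} and \textsc{B-Eq-R}; the arithmetic result handles the $=$ case.

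For the converse, suppose $\langle e,m\rangle\AStepT{s}\langle v,m\rangle$. By \Cref{lem:ss-expr-cost}, evaluation from $\langle e,m\rangle$ reaches a \emph{unique} value. The forward direction exhibits one run reaching $\EvalOrd{e}(m)$, so $v=\EvalOrd{e}(m)$. If one prefers not to lean on the uniqueness clause, the same conclusion follows directly from \Cref{lem:ss-determinism} and \Cref{lem:ss-values-terminal}. Determinism makes the run from $\langle e,m\rangle$ unique, and values are terminal, so any run ending in a value is maximal. Two runs ending in values are therefore equal, and in particular end in the same value.

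I expect the main obstacle to be the congruence lifting of multi-step traces through evaluation contexts. In particular, one must check that the right operand may be evaluated only after the left operand has become a value. This is an induction on the trace with the context fixed, and it must be written once per context (sum left/right, negation, conjunction left/right, equality left/right). It is routine but repetitive. Everything else reduces to invoking determinism and memory invariance.
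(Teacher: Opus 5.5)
Your proof is correct and is the standard argument the paper relies on; the paper states this lemma without proof, as a standard unmechanised lemma. Your forward direction (structural induction with context lifting) and your converse (\Cref{lem:ss-determinism} plus \Cref{lem:ss-values-terminal}) are what those surrounding lemmas are there to support. Two small remarks. First, rely on your determinism argument rather than the uniqueness clause of \Cref{lem:ss-expr-cost}: that clause is itself unproved in the paper, and its wording can be read as uniqueness only among runs of length $r(e)$. Second, \Cref{lem:ss-mem-invariance} is not needed in the forward direction, because your strengthened induction hypothesis already fixes the final memory to be $m$.
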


Each dynamic variable read is exactly one $\lread(x)$-labelled transition
and each dynamic assignment write exactly one $\lwrite(x)$-labelled
transition, so the number of potential access-triggered Rowhammer events
in a run is the length of its access trace.  The total transition count is
an exact cost for this abstract machine, but it is \emph{not} the number
of memory accesses: it also counts operator reductions, branch selection,
sequencing elimination, and loop unfolding, silent ($\tau$) steps that
trigger no fault in the semantics of \Cref{section_probabilistic_model_of_rowhammer}.
 \section{Material for \NoCaseChange{\Cref{section_probabilistic_model_of_rowhammer}}:
A probabilistic model of Rowhammer}\label{sec:small-step-faulty}

This appendix presents the full small-step probabilistic operational
semantics that was given only partially in
\Cref{section_probabilistic_model_of_rowhammer}.  For more on
probabilistic operational semantics see \EG
\cite{DahlqvistF:sempropagi,DiPierroetal2010,LagoZorzi2012}.

The syntax and evaluation order are those of
\Cref{sec:small-step-ordinary}.  Throughout, the blast radius $br$ and
the protected set $locs$ are ambient parameters, and $\pflip$ is a
fault kernel subject to the contract (F1--F4) of
\Cref{sec:fault-kernel-contract}.  The judgements
\[
  \langle e,m\rangle \RHAStep \mu,
  \qquad
  \langle b,m\rangle \RHBStep \mu,
  \qquad
  \langle P,m\rangle \RHPStep \mu
\]
relate a configuration to a \emph{distribution} over successor
configurations, with
$\mu\in\FDist(\mathsf{AExp}\times\Mem)$,
$\mu\in\FDist(\mathsf{BExp}\times\Mem)$, and
$\mu\in\FDist(\mathsf{Prog}\times\Mem)$ respectively.  Exactly two
rules sample the kernel, \textsc{RH-A-Read} and \textsc{RH-P-Assign}:
every redex rule is the Dirac lifting of its deterministic
counterpart, and every congruence rule pushes the reduction context
through the premise distribution with $\FDist.map$.

\subsubsection{Arithmetic expressions}

Arithmetic values have no outgoing transition.  A variable read first
obtains the value $m(x)$ and then samples the fault triggered by that
access.

\[
\frac{\strut}{
  \langle x,m\rangle
  \RHAStep
  \left(
    \mathbf{let}\ m'\leftarrow
      \pflip\bigl(\adj(br,locs,x),m\bigr)
    \ \mathbf{in}\
    \dirac\bigl(\langle m(x),m'\rangle\bigr)
  \right)
}
\tag{\textsc{RH-A-Read}}
\]

\[
\frac{
  \langle e_1,m\rangle \RHAStep \mu
}{
  \langle e_1+e_2,m\rangle
  \RHAStep
  \FDist.map
    \bigl(
      \lambda\langle e_1',m'\rangle.\,
        \langle e_1'+e_2,m'\rangle
    \bigr)
    (\mu)
}
\tag{\textsc{RH-A-Plus-L}}
\]

\[
\frac{
  \langle e_2,m\rangle \RHAStep \mu
}{
  \langle v_1+e_2,m\rangle
  \RHAStep
  \FDist.map
    \bigl(
      \lambda\langle e_2',m'\rangle.\,
        \langle v_1+e_2',m'\rangle
    \bigr)
    (\mu)
}
\tag{\textsc{RH-A-Plus-R}}
\]

\[
\frac{
  v=v_1+v_2
}{
  \langle v_1+v_2,m\rangle
  \RHAStep
  \dirac\bigl(\langle v,m\rangle\bigr)
}
\tag{\textsc{RH-A-Plus}}
\]

\subsubsection{Boolean expressions}

The Boolean values $\mathsf{true}$ and $\mathsf{false}$ have no
outgoing transition.

\[
\frac{
  \langle b,m\rangle \RHBStep \mu
}{
  \langle\neg b,m\rangle
  \RHBStep
  \FDist.map
    \bigl(
      \lambda\langle b',m'\rangle.\,
        \langle\neg b',m'\rangle
    \bigr)
    (\mu)
}
\tag{\textsc{RH-B-Not-Step}}
\]

\[
\frac{\strut}{
  \langle\neg\mathsf{true},m\rangle
  \RHBStep
  \dirac\bigl(\langle\mathsf{false},m\rangle\bigr)
}
\tag{\textsc{RH-B-Not-True}}
\]

\[
\frac{\strut}{
  \langle\neg\mathsf{false},m\rangle
  \RHBStep
  \dirac\bigl(\langle\mathsf{true},m\rangle\bigr)
}
\tag{\textsc{RH-B-Not-False}}
\]

\[
\frac{
  \langle b_1,m\rangle \RHBStep \mu
}{
  \langle b_1\wedge b_2,m\rangle
  \RHBStep
  \FDist.map
    \bigl(
      \lambda\langle b_1',m'\rangle.\,
        \langle b_1'\wedge b_2,m'\rangle
    \bigr)
    (\mu)
}
\tag{\textsc{RH-B-And-L}}
\]

\[
\frac{
  \langle b_2,m\rangle \RHBStep \mu
}{
  \langle t_1\wedge b_2,m\rangle
  \RHBStep
  \FDist.map
    \bigl(
      \lambda\langle b_2',m'\rangle.\,
        \langle t_1\wedge b_2',m'\rangle
    \bigr)
    (\mu)
}
\tag{\textsc{RH-B-And-R}}
\]

\[
\frac{
  t=t_1\wedge t_2
}{
  \langle t_1\wedge t_2,m\rangle
  \RHBStep
  \dirac\bigl(\langle t,m\rangle\bigr)
}
\tag{\textsc{RH-B-And}}
\]

\[
\frac{
  \langle e_1,m\rangle \RHAStep \mu
}{
  \langle e_1=e_2,m\rangle
  \RHBStep
  \FDist.map
    \bigl(
      \lambda\langle e_1',m'\rangle.\,
        \langle e_1'=e_2,m'\rangle
    \bigr)
    (\mu)
}
\tag{\textsc{RH-B-Eq-L}}
\]

\[
\frac{
  \langle e_2,m\rangle \RHAStep \mu
}{
  \langle v_1=e_2,m\rangle
  \RHBStep
  \FDist.map
    \bigl(
      \lambda\langle e_2',m'\rangle.\,
        \langle v_1=e_2',m'\rangle
    \bigr)
    (\mu)
}
\tag{\textsc{RH-B-Eq-R}}
\]

\[
\frac{
  t=(v_1=v_2)
}{
  \langle v_1=v_2,m\rangle
  \RHBStep
  \dirac\bigl(\langle t,m\rangle\bigr)
}
\tag{\textsc{RH-B-Eq}}
\]

\subsubsection{Programs}

An assignment first reduces its right-hand side one expression step at
a time.  Once the expression is a value, the assignment writes that
value and samples the fault triggered by the write, on the
\emph{updated} memory.

\[
\frac{
  \langle e,m\rangle \RHAStep \mu
}{
  \langle x:=e,m\rangle
  \RHPStep
  \FDist.map
    \bigl(
      \lambda\langle e',m'\rangle.\,
        \langle x:=e',m'\rangle
    \bigr)
    (\mu)
}
\tag{\textsc{RH-P-Assign-Step}}
\]

\[
\frac{\strut}{
  \langle x:=v,m\rangle
  \RHPStep
  \left(
    \mathbf{let}\ m'\leftarrow
      \pflip\bigl(
        \adj(br,locs,x),
        m[x\mapsto v]
      \bigr)
    \ \mathbf{in}\
    \dirac\bigl(\langle\mathsf{skip},m'\rangle\bigr)
  \right)
}
\tag{\textsc{RH-P-Assign}}
\]

Sequential composition reduces its left-hand program until that
program has terminated.

\[
\frac{
  \langle P,m\rangle \RHPStep \mu
}{
  \langle P;Q,m\rangle
  \RHPStep
  \FDist.map
    \bigl(
      \lambda\langle P',m'\rangle.\,
        \langle P';Q,m'\rangle
    \bigr)
    (\mu)
}
\tag{\textsc{RH-P-Seq-Step}}
\]

\[
\frac{\strut}{
  \langle\mathsf{skip};Q,m\rangle
  \RHPStep
  \dirac\bigl(\langle Q,m\rangle\bigr)
}
\tag{\textsc{RH-P-Seq-Skip}}
\]

A conditional reduces its guard one Boolean-expression step at a time
and then selects the corresponding branch.

\[
\frac{
  \langle b,m\rangle \RHBStep \mu
}{
  \left\langle
    \mathsf{if}\ b\ \mathsf{then}\ P\ \mathsf{else}\ Q,
    m
  \right\rangle
  \RHPStep
  \FDist.map
    \left(
      \lambda\langle b',m'\rangle.\,
      \left\langle
        \mathsf{if}\ b'\ \mathsf{then}\ P\ \mathsf{else}\ Q,
        m'
      \right\rangle
    \right)
    (\mu)
}
\tag{\textsc{RH-P-If-Step}}
\]

\[
\frac{\strut}{
  \left\langle
    \mathsf{if}\ \mathsf{true}\ \mathsf{then}\ P\ \mathsf{else}\ Q,
    m
  \right\rangle
  \RHPStep
  \dirac\bigl(\langle P,m\rangle\bigr)
}
\tag{\textsc{RH-P-If-True}}
\]

\[
\frac{\strut}{
  \left\langle
    \mathsf{if}\ \mathsf{false}\ \mathsf{then}\ P\ \mathsf{else}\ Q,
    m
  \right\rangle
  \RHPStep
  \dirac\bigl(\langle Q,m\rangle\bigr)
}
\tag{\textsc{RH-P-If-False}}
\]

A while loop unfolds by one standard small step.  Its guard is
subsequently reduced by the conditional rules above.

\[
\frac{\strut}{
  \left\langle
    \mathsf{while}\ b\ \mathsf{do}\ P,
    m
  \right\rangle
  \RHPStep
  \dirac\left(
    \left\langle
      \mathsf{if}\ b\ \mathsf{then}\
        \bigl(P;\mathsf{while}\ b\ \mathsf{do}\ P\bigr)\
      \mathsf{else}\ \mathsf{skip},
      m
    \right\rangle
  \right)
}
\tag{\textsc{RH-P-While}}
\]

There is no transition from $\langle\mathsf{skip},m\rangle$. These are
the terminal program configurations.  The rules are syntax-directed,
so every non-terminal configuration determines a unique one-step
distribution, written $\RHStep(\langle P,m\rangle)$ at the program
level. The absorbing kernel $\RHStepHat$ and the finite runs
$\RunRH{n}$ and $\RunRHT{n}$ built from it are defined in
\Cref{sec:finite-prob-executions}.

\subsection{The mechanised proof of the central theorem}
\label{sec:mechanised-proof}

For reference we reproduce the Lean proof of \Cref{thm:goal-formal}
from \texttt{Soundness.lean}. The prose sketch is in
\Cref{sec:proof-sketch}.

\begin{lstlisting}[language=Lean]
theorem physical_separation_sound : PhysicalSeparationSound := by
  intro sep f hframe br locs hsafe p hwf m n
  apply Dist.mapD_eq_dirac_of_support
  intro r hr
  have hsub : ∀ x ∈ locs, x ∈ locs := fun x hx => hx
  have hpres : ∀ x ∈ locs, ∀ (m₀ m' : Memory),
      m' ∈ (f (sep.adj br locs x) m₀).support → ∀ ℓ, ℓ ∈ locs → m' ℓ = m₀ ℓ :=
    fun x hx m₀ m' hm' ℓ hℓ =>
      hframe.outside_unchanged _ m₀ m' ℓ hm' (sep.safe_disjoint hsafe hx ℓ hℓ)
  obtain ⟨h1, h2, h3⟩ :=
    runRHTrace_protected (keep := (· ∈ locs)) hsub hpres n hwf
      (AgreeP.refl _ m) r hr
  simp only [protView, Prod.mk.injEq]
  refine ⟨h1, ?_, ?_⟩
  · funext x
    exact h2 x.1 x.2
  · rw [h3]
\end{lstlisting}
 
\noindent
The script follows the three moves of the sketch.  The \texttt{intro}
line introduces the separation geometry, the kernel with its
frame-locality hypothesis (\texttt{hframe}, clause F2), the blast radius
and protected set, physical separation (\texttt{hsafe}), the well-formed
program (\texttt{hwf}), the initial memory, and the step bound.
\texttt{Dist.mapD\_eq\_dirac\_of\_support} is the support
characterisation of the point mass
(\Cref{section_mathematical_preliminaries}): it reduces the
distributional equation to showing that every outcome \texttt{r} in the
support of the trace-carrying Rowhammer run has the protected view of
the deterministic run.  The two \texttt{have}s discharge the two clauses
of the parametric invariant at $\keep\triangleq(\cdot\in locs)$:
\texttt{hsub} is clause (i), used locations are kept, trivial at this
instantiation, where it is exactly well-formedness, \texttt{hpres} is
clause (ii), fault samples preserve kept locations, composed from
\texttt{safe\_disjoint} (the victim set of a protected access avoids
$locs$) and \texttt{hframe.outside\_unchanged} (samples fix everything
outside the victim set).  The iterated invariant
\texttt{runRHTrace\_protected}, started from the reflexive agreement on
the initial memory, then yields the three components of the protected
view: \texttt{h1}, equal residual programs, \texttt{h2}, memories
agreeing on $locs$, \texttt{h3}, equal label sequences.  The remainder
repackages them as an equality of protected views: \texttt{funext}
turns pointwise agreement on protected locations into equality of the
restricted memories as functions on the subtype, and \texttt{h3}
rewrites the access traces.
 
\renewcommand{\refname}{Appendix References} \putbib                                      \end{bibunit}                                

\end{document}